\documentclass[sigplan,screen]{acmart}
\makeatletter
\providecommand{\mdseries@tt}{}
\makeatother

\usepackage{algorithmic}
\usepackage{alltt}%
\usepackage{amsmath,amssymb,amsfonts,amsthm}
\usepackage{comment}
\usepackage{enumitem}
\usepackage{etextools}
\usepackage{hyphenat}
\usepackage{ifthen}
\usepackage{mathpartir}
\usepackage{mathtools}
\usepackage{minibox}
\usepackage{multirow}
\usepackage{stmaryrd}
\usepackage{suffix}
\usepackage{supertabular}
\usepackage{textcomp}
\usepackage{thm-restate}
\usepackage[normalem]{ulem}
\usepackage{xcolor}
\usepackage{xspace}
\definecolor{linkcolor}{rgb}{0.65,0,0}
\definecolor{citecolor}{rgb}{0,0.65,0}
\definecolor{urlcolor}{rgb}{0,0,0.65}
\usepackage{breakurl}
\usepackage{cleveref}
\usepackage[frozencache]{minted}
\usepackage{halloweenmath}

\newcommand{\ccode}[1]{\mintinline[fontsize=\small]{c}{#1}}
\newminted[cblock]{c}{xleftmargin=\parindent, linenos=none, breaklines=true, fontsize=\small}
\newcommand{\fcode}[1]{\mintinline[fontsize=\small]{fact_lexer.py:FaCTLexer -x}{#1}}
\newminted[fblock]{fact_lexer.py:FaCTLexer -x}{xleftmargin=\parindent, linenos=none, breaklines=true, fontsize=\small}

\makeatletter
\def\my@overarrow@#1#2#3{\vbox{\ialign{##\crcr #1#2\crcr \noalign{\kern-\p@\nointerlineskip}$\m@th \hfil #2#3\hfil $\crcr}}}
\newcommand{\amsvectb}{%
  \mathpalette {\my@overarrow@\vectfillb@}}
\newcommand{\vecbar}{%
  \scalebox{0.6}[.8]{$\relbar$}}
\def\vectfillb@{\arrowfill@\vecbar\vecbar{\raisebox{-4.2pt}[\p@][\p@]{\hspace{.1pt}$\mathord\mathchar"017E$}}}
\makeatother

  {\begin{list}{$\blacktriangleright$}%
    {\leftmargin=\parindent \itemsep=2pt \topsep=2pt
     \parsep=0pt \partopsep=0pt}}%
  {\end{list}}

\renewcommand\vec{\amsvectb}

\newcommand{\tool}{Pitchfork\xspace}

\newcommand{\ignore}[1]{}

\newcommand\ie{i.e.,\xspace}
\newcommand\eg{e.g.,\xspace}

\definecolor{lightblue}{rgb}{.8,.9,1}
\definecolor{darkred}{rgb}{.8,.0,0}
\newcommand\pcnum[1]{\uline{#1}}
\newcommand\idxnum[1]{\overline{#1}}

\makeatletter
    \newcommand{\DeclarePairedDelimiterCase}[2]{%
        \newcommand#1[1][]{%
            \ifthenelse{\equal{##1}{normal}}%
            {#2}%
            {%
                \ifthenelse{\equal{##1}{big}\OR\equal{##1}{Big}\OR\equal{##1}{bigg}\OR\equal{##1}{Bigg}}%
                {\expandnext{#2[}{\csname##1\endcsname}]}%
                {#2*}%
            }%
        }%
    }
    \newcommand{\DeclarePairedDelimiterY}[4][Temp]{%
        \expandafter\DeclarePairedDelimiter\csname#2#1\endcsname{#3}{#4}%
        \expandnext{\expandnext{\DeclarePairedDelimiterCase}{\csname#2\endcsname}}{\csname#2#1\endcsname}%
    }
    \newcommand{\DeclarePairedDelimiterXY}[6][Temp]{%
        \expandafter\DeclarePairedDelimiterX\csname#2#1\endcsname[#3]{#4}{#5}{#6}%
        \expandnext{\expandnext{\DeclarePairedDelimiterCase}{\csname#2\endcsname}}{\csname#2#1\endcsname}%
    }
\makeatother

\newcommand{\SONI}{SCT\xspace}

\newcommand{\mypara}[1]{\smallskip\noindent\emph{\textbf{{#1.}}}}

\newcommand{\RegSet}{\ensuremath{\mathcal{ R}}}
 
 \newcommand{\V}{\ensuremath{\mathcal{ V}}} 
\newcommand{\Labels}{\ensuremath{\mathcal{ L}}} 
  
\newcommand{\Confs}{{\sf Confs}}

\newcommand{\BuffInstr}{\ensuremath{\textsf{TransInstr}}}

\DeclarePairedDelimiterY{floor}{\lfloor}{\rfloor}
\DeclarePairedDelimiterY{ceil}{\lceil}{\rceil}
\DeclarePairedDelimiterY{abs}{\lvert}{\rvert}

\newcommand\hasrpc[2]{\,\vdash^{\!\!\!#1}_{\!\!\!#2}\,}
\WithSuffix\newcommand\hasrpc*{\hasrpc\rp\pc}
\newcommand\aug{\mapsto}

\renewcommand\implies{\Rightarrow}
\newcommand\stuck\bot
\newcommand\join{\sqcup}

\renewcommand\ll{\llbracket}
\newcommand\rr{\rrbracket}
\newcommand\step[2]{\xhookrightarrow[#1]{#2}}

\newcommand\bigstep[3][N]{\,{}_{#3}\!{\Downarrow}_{#2}^{#1}\,}
\newcommand\bigstepnon[2]{\,{}_{#1}\!{\Downarrow}_{#2}\,}

\newcommand\freshi{\textsf{MAX}(\buf) + 1} 

\newcommand\retirable[1]{\textsf{MIN}(\buf)=#1}
\newcommand\regmap[1]{(\buf +_i \reg)(#1) }
\newcommand\regmapname{(\buf +_i \reg) }

\newcommand\specstep[2][]{(\reg, \mem, \n, \buf) \xhookrightarrow[#2\,]{#1} }

\newcommand\n{n}
\newcommand\reg{\rho}

\newcommand\mem{\mu}
\newcommand\buf{\textit{buf}\,}
\newcommand\ktrue{\textsf{true}}
\newcommand\kfalse{\textsf{false}}
\newcommand\rv{r\!v}
\newcommand\rets{\sigma}
\newcommand\rsp{r_{\mathit{sp}}}
\newcommand\vsp{v_\l}
\newcommand\rret{r_{\mathit{tmp}}}

\renewcommand{\l}{\ell}

\newcommand\rf{r}
\renewcommand\a{a}
\newcommand\rmi[1]{#1 \setminus #1(i)}

\newcommand\seq{\mathit{seq}}
\newcommand\loweqv{\simeq_\kpub}

\newcommand\public{\textsf{public}}%
\newcommand\secret{\textsf{secret}}%
\newcommand\kpub{\textsf{pub}}%
\newcommand\ksec{\textsf{sec}}%

\newcommand\load{\textsf{load}}
\newcommand\store{\textsf{store}}
\newcommand\op{\textsf{op}}
\newcommand\cond{\textsf{br}}
\newcommand\jump{\textsf{jump}}
\newcommand\lfence{\textsf{fence}}
\newcommand\call{\textsf{call}}

\newcommand\ret{\textsf{ret}}
\newcommand\jmpi{\textsf{jmpi}}

\newcommand\aload[3]{(#1 = \load(#2,#3))}      %
\newcommand\astore[3]{\store(#1,#2,#3)}        %
\newcommand\aopx[4]{#2 = \op(#1,#3,#4)}       %
\newcommand\aop[4]{(#2 = \op(#1,#3,#4))}       %
\newcommand\acond[4]{\cond(#1,#2,#3,#4)}       %
\newcommand\alfence[1]{\lfence\ #1}            %
\newcommand\acall[2]{\call(#1,#2)}             %
\newcommand\aret{\ret}                         %
\newcommand\ajmpi[1]{\jmpi(#1)}                %

\newcommand\iloadx[2]{#1 = \load(#2)}         %
\newcommand\iload[2]{(#1 = \load(#2))}         %
\newcommand\ivalx[2]{#1 = #2}                 %
\newcommand\ival[2]{(#1 = #2)}                 %
\newcommand\istore[2]{\store(#1,#2)}           %
\newcommand\iopx[3]{#2 = \op(#1,#3)}          %
\newcommand\iop[3]{(#2 = \op(#1,#3))}          %
\newcommand\icond[5]{\cond(#1,#2,#3,(#4,#5))}  %
\newcommand\ijmp[1]{\jump\ #1}                 %
\newcommand\ilfence{\lfence}                   %
\newcommand\icall{\call}                       %
\newcommand\iret{\ret}                         %
\newcommand\ijmpi[2]{\jmpi(#1,#2)}             %

\newcommand\addr{\textit{addr}}
\newcommand\addrf[1]{\opx{\addr}{#1}}
\newcommand\opcode{\textit{op}}
\newcommand\opf[1]{\ll \opcode(#1) \rr}
\newcommand\opx[2]{\ll #1(#2) \rr}
\newcommand\condop{\textit{op}}

\newcommand\push{\textit{push}}
\newcommand\pop{\textit{pop}}
\newcommand\sttop{\textit{top}}

\newcommand\fetchx{\textsf{fetch}}
\newcommand\fetch{\fetchx}
\newcommand\fetchg[1]{\textsf{fetch:}\,\,#1}
\newcommand\executex{\textsf{execute}}
\newcommand\execute[1]{\executex~#1}
\newcommand\executei{\executex~i}
\newcommand\executevalue[1]{\execute{#1:\,\textsf{value}}}
\newcommand\executeaddr[1]{\execute{#1:\,\textsf{addr}}}
\newcommand\executeivalue{\executevalue{i}}
\newcommand\executeiaddr{\executeaddr{i}}
\newcommand\executeg[2]{\executex~#1:\,\textsf{fwd}\,#2}
\newcommand\executeig[1]{\executex~i:\,\textsf{fwd}\,#1}
\newcommand\retirex{\textsf{retire}}
\newcommand\retire{\retirex}

\newcommand\oread{\texttt{read}~}
\newcommand\owrite{\texttt{write}~}
\newcommand\ojump{\texttt{jump}~}
\newcommand\ofwd{\texttt{fwd}~}
\newcommand\omiss{\texttt{rollback}}

\newcommand\ofwdx{\texttt{fwd}}

\newcommand{\later}[1]{\colorbox{lightgray}{$#1$}}

\newcommand{\ucsdmark}{{$^{\dagger}$}}
\newcommand{\inriamark}{{$^{\star}$}}
\newcommand{\mpimark}{{$^{\spadesuit}$}}
\newcommand{\imdeamark}{{$^{\clubsuit}$}}

\settopmatter{authorsperrow=1,printfolios=true,printccs=false,printacmref=false}

\newcommand{\jointauthor}[1]{\author{#1}}
\newcommand{\jointaffiliation}[1]{\affiliation{#1}}

\jointauthor{%
  Sunjay~Cauligi\ucsdmark\quad
  Craig~Disselkoen\ucsdmark\quad
  Klaus~v.~Gleissenthall\ucsdmark
}

\jointauthor{%
  Dean~Tullsen\ucsdmark\quad
  Deian~Stefan\ucsdmark\quad
  Tamara~Rezk\inriamark\quad
  Gilles~Barthe\mpimark\imdeamark
}
\jointaffiliation{\vspace{.5em}
  \ucsdmark{UC San Diego, USA}\qquad
  \inriamark{INRIA Sophia Antipolis, France}\qquad \\
  \mpimark{MPI for Security and Privacy, Germany}\qquad
  \imdeamark{IMDEA Software Institute, Spain}
  \vspace{.25em}
}

\renewcommand{\author}[1]{}
\renewcommand{\affiliation}[2][]{}
\renewcommand{\institution}[1]{}
\renewcommand{\country}[1]{}
\setcopyright{none}
\acmPrice{}
\acmDOI{}
\acmYear{}
\copyrightyear{}
\acmSubmissionID{}
\acmISBN{}
\acmConference[]{}{}{}
\acmBooktitle{}

\begin{document}
\title{Constant-Time Foundations for the New Spectre Era}

\begin{abstract}
  The constant-time discipline is a software-based countermeasure used
  for protecting high assurance cryptographic implementations against
  timing side-channel attacks. Constant-time is effective (it protects
  against many known attacks), rigorous (it can be formalized using
  program semantics), and amenable to automated verification. Yet, the
  advent of micro-architectural attacks makes constant-time as it
  exists today far less useful.

  This paper lays foundations for constant-time programming in the presence of
  speculative and out-of-order execution.  We present an operational semantics
  and a formal definition of constant-time programs in this extended setting.
  Our semantics eschews formalization of microarchitectural features (that are
  instead assumed under adversary control), and yields a notion of
  constant-time that retains the elegance and tractability of the usual notion.
  We demonstrate the relevance of our semantics in two ways:  First, by
  contrasting existing Spectre-like attacks with our definition of
  constant-time.  Second, by implementing a static analysis tool,
  \tool, which detects violations of our extended constant-time property in
  real world cryptographic libraries.
\end{abstract}

\begin{CCSXML}
<ccs2012>
   <concept>
       <concept_id>10002978.10002986.10002989</concept_id>
       <concept_desc>Security and privacy~Formal security models</concept_desc>
       <concept_significance>500</concept_significance>
   </concept>
   <concept>
       <concept_id>10002978.10003001.10010777.10011702</concept_id>
       <concept_desc>Security and privacy~Side-channel analysis and countermeasures</concept_desc>
       <concept_significance>500</concept_significance>
   </concept>
</ccs2012>
\end{CCSXML}

\ccsdesc[500]{Security and privacy~Formal security models}
\ccsdesc[500]{Security and privacy~Side-channel analysis and countermeasures}

\keywords{Spectre; speculative execution; semantics; static analysis}

\maketitle

\section{Introduction}

Protecting secrets in software is hard.
Security and cryptography engineers must write programs that protect secrets,
both at the source level and when they execute on real hardware.
Unfortunately, hardware too easily divulges information about a program's
execution via \emph{timing side-channels}---e.g., an attacker can learn secrets
by simply observing (via timing) the effects of a program on the hardware
cache~\cite{ge2018survey}.

The most robust way to deal with timing side-channels is via
\emph{constant-time programming}---the paradigm used to implement almost all
modern cryptography~\cite{bearSSL, libsodium_repo, mbedtls_repo, cttk,
codingRules}.
Constant-time programs can neither branch on secrets nor access memory
based on secret data.\footnote{More generally, constant-time programs
  cannot use secret data as input to any variable-time operation---e.g.,
  floating-point multiplication.  }
These restrictions ensure that programs do not leak secrets
via timing side-channels on hardware {\it without} microarchitectural
features.

Unfortunately, these guarantees are moot for most modern hardware:
Spectre~\cite{spectre-v1}, Meltdown~\cite{meltdown},
ZombieLoad~\cite{zombieload}, RIDL~\cite{ridl}, and Fallout~\cite{fallout} are
all dramatic examples of attacks that exploit microarchitectural features.
These attacks reveal that code that is deemed constant-time in the usual sense
may, in fact, leak information on processors with microarchitectural features.
The decade-old constant-time recipes are no longer enough.\footnote{OpenSSL
found this situation so hopeless that they recently updated their security
model to explicitly exclude ``physical system side
channels''~\cite{opensslsecuritypolicymay19}.}

In this work, we lay the foundations for constant-time in the presence
of microarchitectural features that have been exploited in recent
attacks: out-of-order and speculative execution. We focus on
constant-time for two key reasons.
First, \emph{impact}: constant-time programming is largely used in
real-world crypto libraries---and high-assurance code---where
developers already go to great lengths to eliminate leaks via
side-channels.
Second, \emph{foundations:} constant-time programming is already
rooted in foundations, with well-defined semantics~\cite{barthe2014system,cauligi:2019:fact}.
These semantics
consider very powerful attackers---e.g., attackers
in~\cite{barthe2014system} have control over the cache and the
scheduler. An advantage of considering powerful attackers is that the
semantics can overlook many hardware details---e.g., since the cache
is adversarially controlled, there is no point in modeling it
precisely---making constant-time amenable to automated verification
and enforcement.

\mypara{Contributions}
We first define a semantics for an abstract, three-stage (fetch,
execute, and retire) machine. Our machine supports out-of-order and
speculative execution by modeling \emph{reorder buffers} and
\emph{transient instructions}, respectively. We assume
that attackers have complete control over microarchitectural features
(e.g., the branch target predictor) when executing a victim program and
model the attacker's control over predictors using \emph{directives}.
This keeps our semantics simple yet powerful: our semantics abstracts over all
predictors when proving security---of course, assuming that predictors
themselves do not leak secrets.
We further show how our semantics can be extended to capture new
predictors---e.g., a hypothetical \emph{memory aliasing} predictor.

We then define \emph{speculative constant-time}, an extension of
constant-time for machines with out-of-order and speculative
execution. 
This definition allows us to discover microarchitectural side channels in a
principled way---all four classes of Spectre attacks as classified by Canella
et al.~\cite{canella18systematic}, for example, manifest as violations of our
constant-time property.

We further use our semantics as the basis for a prototype analysis tool, \tool, built on top of the
\texttt{angr} symbolic execution engine~\cite{shoshitaishvili2016state}.
Like other symbolic analysis tools, \tool suffers from path explosion, which
limits the depth of speculation we can analyze.
Nevertheless, we are able to use \tool to detect multiple Spectre bugs in real code.
We use \tool to detect leaks in the well-known Kocher test
cases~\cite{kocher-tests} for Spectre v1, as well as our more extensive test
suite which includes Spectre v1.1 variants.
More significantly, we use \tool to analyze---and find leaks in---real
cryptographic code from the libsodium, OpenSSL, and curve25519-donna libraries.

\mypara{Open source}
\tool and our test suites are open source and available at
\url{https://pitchfork.programming.systems}.

\section{Motivating Examples}
\label{sec:motivating}
In this section, we show why classical constant-time programming is
insufficient when attackers can exploit microarchitectural features.
We do this via two example attacks and show how these attacks are captured by
our semantics.

\mypara{Classical constant time is not enough}
Our first example consists of 3 lines of code, shown in \Cref{fig:v1-attack}
(top right). The program, a variant of the
classical Spectre v1 attack~\cite{spectre-v1}, branches on the value
of register $r_a$ (line $\pcnum 1$). If $r_a$'s value is smaller than
$4$, the program jumps to program location $\pcnum 2$, where it
uses $r_a$ to index into a public array~$A$, saves the value into
register~$r_b$, and uses~$r_b$ to index into another public array~$B$.
If $r_a$ is larger than or equal to $4$ (\ie the index is out of
bounds), the program skips the two load instructions and jumps to
location~$\pcnum 4$.
In a sequential execution, this program neither loads nor
branches on secret values. It thus trivially satisfies the constant\hyp{}time
discipline.

However, modern processors do not execute sequentially.
Instead, they continue fetching instructions before prior instructions
are complete.
In particular, a processor may continue fetching instructions
beyond a conditional branch, before evaluating the branch condition.
In that case, the processor \emph{guesses} which branch will be taken.
For example, the processor may erroneously guess that the branch condition at
line $\pcnum 1$ evaluates to $\ktrue$, even though~$r_a$ contains value~$9$.
It will therefore continue down the ``true'' branch speculatively.
In hardware, such guesses are made by a branch prediction unit, which may
have been mistrained by an adversary.

These guesses, as well as additional choices such as execution order, are
directly supplied by the adversary in our semantics.
We model this through a series of \emph{directives}, as shown on the bottom
left of \Cref{fig:v1-attack}.
The directive $ \fetchg \ktrue $ instructs our model to speculatively follow
the $\ktrue$ branch and to place the fetched instruction at index~$\idxnum 1$
in the \emph{reorder buffer}.
Similarly, the two following $\fetch$ directives place the loads at indices
$\idxnum 2$ and $\idxnum 3$ in the buffer.
The instructions in the reorder buffer, called \emph{transient instructions},
do not necessarily match the original instructions, but can contain
additional information (see Table~\ref{tab:instructions}).
For instance, the transient version of the branch instruction records which
branch has been speculatively taken.

In our example, the attacker next instructs the model to execute the
first load, using the directive $\execute \idxnum 2$.
Because the bounds check has not yet been executed, the load reads from the
secret element~$\textit{Key}[1]$, placing the value in $r_b$.
The attacker then issues directive $\execute \idxnum 3$ to execute the
following $\load$; this $\load$'s address is calculated as $\mathtt{44} +
\textit{Key}[1]$.
Accessing this address affects externally visible cache state, allowing the
attacker to recover $\textit{Key}[1]$ through a cache
side-channel attack~\cite{ge2018survey}.
This is encoded by the leakage observation shown in red on the bottom right.
Though this secret leakage cannot happen under sequential execution, our
semantics clearly highlights the possible leak when we account for
microarchitectural features.

\begin{figure}
  \small
  \centering

  \begin{tabular}{llcrl}

    \multicolumn{2}{c}{Registers}                    && \multicolumn{2}{c}{Program} \\
    $~r $               & $ \reg(r) $                && $n $        & $ \mem(n) $ \\
    \cline{1-2} \cline {4-5}
    $r_a$               & $ 9_\kpub $                && $\pcnum 1 $ & $ \acond{\texttt{>}}{(4, r_a)}{\pcnum 2}{\pcnum 4} $ \\
    \multicolumn{2}{c}{Memory}                       && $\pcnum 2 $ & $ \aload{r_b}{[\mathtt{40}, r_a]}{\pcnum 3} $ \\
    $~a $               & $ \mem(a) $                && $\pcnum 3 $ & $ \aload{r_c}{[\mathtt{44}, r_b]}{\pcnum 4} $ \\
    \cline{1-2}
    $ \mathtt{40..43} $ & $ \textit{array A}_\kpub $ && $\pcnum 4 $ & \dots \\
    $ \mathtt{44..47} $ & $ \textit{array B}_\kpub $ &&             & \\
    $ \mathtt{48..4B} $ & $ \textit{array Key}_\ksec $ \\

  \end{tabular}
  \vspace{1.25em}\\
  Speculative execution: \\
  \begin{tabular}{lll}

    Directive              & Effect on reorder buffer     & Leakage \\\hline
    $ \fetchg \ktrue $     & \vphantom{\LARGE{hack}}$ \idxnum 1 \mapsto \icond{\texttt{>}}{(4,r_a)}{\pcnum 2}{\pcnum 2}{\pcnum 4} $ & \\
    $ \fetch $             & $ \idxnum 2 \mapsto \iload{r_b}{[\mathtt{40}, r_a]} $                           & \\
    $ \fetch $             & $ \idxnum 3 \mapsto \iload{r_c}{[\mathtt{44}, r_b]} $                           & \\
    $ \execute \idxnum 2 $ & $ \idxnum 2 \mapsto \ival{r_b}{\textit{Key}[1]_\ksec} $                   & $ \oread \mathtt{49}_\kpub $ \\
    $ \execute \idxnum 3 $ & $ \idxnum 3 \mapsto \ival{r_c}{X} $                                             & $ \color{darkred}{\oread a_\ksec }$ \\

  \end{tabular}

  where $a = \textit{Key}[1]_\ksec + \mathtt{44}$
  \caption{
    Example demonstrating a Spectre~v1 attack.
    The branch at $\pcnum 1$ acts as bounds check for array~$A$.
    The execution speculatively ignores the bounds check, and
    leaks a byte of the secret~$Key$.
  }
\label{fig:v1-attack}
\end{figure}

\mypara{Modeling hypothetical attacks}
Next, we give an example of a hypothetical class of Spectre attack
captured by our extended semantics.
The attack is based on a microarchitectural feature which would allow
processors to speculate whether a store and load pair might
operate on the same address, and forward values between them~\cite{spoiler, store2leak}.

We demonstrate this attack in \Cref{fig:vnew-attack}.
The reorder buffer, after all instructions have been fetched, is shown in the
top right.
The program stores the value of register~$r_b$ into the
$\textit{secretKey}_\ksec$ array and eventually loads
two values from public arrays.
The attacker first issues the directive $\executevalue{\idxnum 2}$;
this results in a buffer where the store
instruction at $\idxnum 2$ has been modified to record the resolved
value $ x_\ksec$.
Next, the attacker issues the directive $\executeg{\idxnum 7}{\idxnum 2}$,
which causes the model to
mispredict that the load at $\idxnum 7$ aliases with the store at $\idxnum 2$,
and thus to forward the value $x_\ksec$ to the load.
The forwarded value $x_\ksec$ is then used in the address $a =48 + x_\ksec$
of the load instruction at index $\idxnum 8$.
There, the loaded value $X$ is irrelevant, but the address $a$ is leaked to the
attacker, allowing them to recover the secret value $x_\ksec$.  The
speculative execution continues and rolls back when the
misprediction is detected (details on this are given in
\Cref{sec:semantics}), but at this point, the secret has
already been leaked.

As with the example in \Cref{fig:v1-attack}, the program in this example follows the
(sequential) constant-time discipline, yet leaks during speculative execution.
But, both examples are insecure under our new notion of \emph{speculative
constant-time} as we discuss next.

\begin{figure}
  \small
  \centering

  \begin{tabular}{llccl}

    \multicolumn{2}{c}{Registers}                      && \multicolumn{2}{c}{Reorder buffer} \\
    $~r $               & $ \reg(r) $                  && $i$        & $ \buf(i) $ \\
    \cline{1-2} \cline {4-5}
    $r_a$               & $ 2_\kpub $                  && \vphantom{\LARGE{hack}}$\idxnum 2 $ & $ \istore{r_b}{[\mathtt{40}, r_a]} $ \\
    $r_b$               & $ x_\ksec $                  &&              & \dots \\
    \multicolumn{2}{c}{Memory}                         && $\idxnum 7 $ & $ \iload{r_c}{[\mathtt{45}]} $ \\
    $~a $               & $ \mem(a) $                  && $\idxnum 8 $ & $ \iload{r_c}{[\mathtt{48}, r_c]} $ \\
    \cline{1-2}
    $ \mathtt{40..43} $ & $ \textit{secretKey}_\ksec $ && \\
    $ \mathtt{44..47} $ & $ \textit{pubArrA}_\kpub $   && \\
    $ \mathtt{48..4B} $ & $ \textit{pubArrB}_\kpub $   && \\

  \end{tabular}
  \vspace{1.25em}\\
  Speculative execution\\ 
  \begin{tabular}{lll}

    Directive                                & Effect on $\buf$                                                     & Leakage \\\hline
    $ \executevalue{\idxnum 2} $             & \vphantom{\LARGE{hack}}$ \idxnum 2 \mapsto \istore{x_\ksec}{[\mathtt{40}, r_a]} $           & \\
    $ \executeg{\idxnum 7}{\idxnum 2} $      & $ \idxnum 7 \mapsto \iload{r_c}{[\mathtt{45}], x_\ksec, \idxnum 2} $ & \\
    $ \execute{\idxnum 8} $                  & $ \idxnum 8 \mapsto \ival{r_c}{X\{\bot,a\}} $                        & \color{darkred}{$\oread a_\ksec $} \\
    $ \executeaddr{\idxnum 2} $              & $ \idxnum 2 \mapsto \istore{r_b}{\mathtt{42}_\kpub} $                & $ \ofwd \mathtt{42}_\kpub $ \\
    \multirow{2}{*}{$ \execute{\idxnum 7} $} & \multirow{2}{*}{$ \{\idxnum 7,\idxnum 8\} \notin \buf $}             & $ \omiss, $ \\
                                             &                                                                      & $\ofwd \mathtt{45}_\kpub$ \\

  \end{tabular}

  where $a = x_\ksec + \mathtt{48}$
\caption{Example demonstrating a hypothetical attack abusing an aliasing predictor. %
  This attack differs from prior speculative data forwarding attacks in that branch misprediction is not needed.}
\label{fig:vnew-attack}
\end{figure}

\begin{table*}[ht!]
  \caption{Instructions and their transient instruction form.
}
\label{tab:instructions}
\begin{minipage}{1.0\textwidth}
  \begin{center}
   \begin{tabular}{rlll}
     &{\bf Instruction} & {\bf Transient form(s)} \\
     \toprule
     arithmetic operation & \multirow{2}{*}{$\aop{\opcode}{r}{\vec \rv}{n'}$}&$\iop{\opcode}{\rf}{\vec \rv}$ & \emph{(unresolved op)} \\
     (\opcode\ specifies opcode)&&$\ival{\rf}{v_\l}$             & \emph{(resolved value)}\\
     \midrule
     \multirow{2}{*}{conditional branch} &\multirow{2}{*}{$\acond{\condop}{\vec \rv}{n^\ktrue}{n^\kfalse}$}&$\icond{\condop}{\vec \rv}{n_0}{n^\ktrue}{n^\kfalse}$ & \emph{(unresolved conditional)} \\
     && $\ijmp{n_0}$                                          & \emph{(resolved conditional)}\\
    \midrule
     & \multirow{4}{*}{$\aload{r}{\vec \rv}{\n'} $}& $\iload{r}{\vec \rv}^n$ & \emph{(unresolved load)} \\
     memory load&& $\iload{\rf}{\vec \rv,( v_\l, j)}^n $ & \emph{(partially resolved load with dependency on $j$)} \\
     (at program point $n$)&& $\ival{\rf}{v _{\l}\{\bot, \a \}}^n$ &\emph{(resolved load without dependencies)} \\
     && $\ival{\rf}{v _{\l}\{j, \a \}}^n$ &\emph{(resolved load with dependency on $j$)} \\
    \midrule
     \multirow{2}{*}{memory store} & \multirow{2}{*}{$\astore{\rv}{\vec \rv}{\n'} $}& $\istore{\rv}{\vec \rv} $             & \emph{(unresolved store)} \\
     && $\istore{v_\l}{a_\l} $                & \emph{(resolved store)}\\
    \midrule
     \multirow{1}{*}{indirect jump} & \multirow{1}{*}{$\ajmpi{\vec \rv}$}& $\ijmpi{\vec \rv}{n_0}$ & \emph{(unresolved jump predicted to $n_0$)} \\
    \midrule
     \multirow{2}{*}{function calls} & \multirow{1}{*}{$\acall{n_f}{n_\mathit{ret}}$}& $\icall$ & \emph{(unresolved call)} \\
     & \multirow{1}{*}{$\aret$}& $\iret$ & \emph{(unresolved return)} \\
    \midrule
     \multirow{1}{*}{speculation fence} & \multirow{1}{*}{$\alfence \n$}& $\ilfence$ & \emph{(no resolution step)} \\
\bottomrule
     \end{tabular}
  \end{center}
\end{minipage}
\end{table*}

\section{Speculative Semantics and Security}\label{sec:semantics}

In this section we define the notion of  speculative constant time, and propose 
a speculative semantics that models execution on modern processors. 
We start by laying the groundwork for our definitions and semantics.

\mypara{Configurations}
A configuration $ C \in \Confs$ represents the state of
execution at a given step.
It is defined as a tuple $(\reg, \mem, \n, \buf)$ where:
\begin{itemize}
  \item $\reg: \RegSet \rightharpoonup \V$ is a map from a finite set of
    register names $\RegSet$ to values;
  \item $\mem: \V \rightharpoonup  \V $ is a memory;
  \item $\n: \V$ is the current program point;
  \item $\buf : \mathbb{N} \rightharpoonup  \BuffInstr$ is the reorder
    buffer.
    
  \end{itemize}
  
\mypara{Values and labels}
As a convention, we use $n$ for memory
addresses that map to instructions, and $\a$ for addresses that
map to data. 
Each value is annotated with a label from a lattice of security labels
with join operator $\sqcup$.  For brevity, we sometimes omit $\public$ label
annotation on values.

Using labels, we define an equivalence $\loweqv$ on
configurations. We say that two configurations are equivalent if they coincide
on $\public$ values in registers and memories.

\mypara{Reorder buffer}
The \emph{reorder buffer} maps buffer indices
(natural numbers) to transient instructions.
We write $\buf(i)$ to denote the instruction at index $i$ in buffer
$\buf$, if $i$ is in $\buf$'s domain. We write $\buf[i \mapsto
\textnormal{\underline{instr}}]$ to denote the result of extending
$\buf$ with the mapping from $i$ to $\textnormal{\underline{instr}}$,
and $\rmi{\buf}$ for the function formed by
removing $i$ from $\buf$'s domain. We write
$\buf[j : j < i]$ to denote the restriction of $\buf$'s domain to
all indices $j$, s.t. $j < i$ (\ie removing all mappings at indices
$i$ and greater).
Our rules add and remove indices in a way that ensures that $\buf$'s
domain will always be contiguous.

\mypara{Notation}
We let ${\sf MIN}(M)$ (resp. ${\sf MAX}(M)$)  denote the minimum
(maximum) index in the domain of a mapping $M$. 
We denote the empty mapping as
$\emptyset$ and let ${\sf MIN}(\emptyset)={\sf MAX}(\emptyset) =0$.

For a formula $\varphi$, we may discuss the bounded
highest (lowest) index for which a formula holds. We write
$ max(j) < i : \varphi(j)$ to mean that $j$ is the highest
index less than $i$ for which~$\varphi$ holds, and define $ min(j) >i
: \varphi(j)$ analogously.

\mypara{Register resolve function}
In \Cref{fig:reg-resolve}, 
 we  define the
  \emph{register resolve
function}, which we use to determine the value of a register in the presence of
transient instructions in the reorder buffer. 
For index~$i$ and register~$r$, the function may \textbf{(1)} return the
latest assignment to~$r$ prior to position $i$ in the buffer, if the
corresponding operation is already resolved; \textbf{(2)} return the value
from the register map $\reg$, if there are no pending assignments to $r$ in
the buffer; or \textbf{(3)} be undefined.
Note that if the latest assignment to $r$ is yet unresolved then
$\regmap{r} = \bot$.
\begin{figure}
\[
\regmap{r} =
  \begin{cases}
    v_\l  & \mathit{if} \  max(j) < i : \buf(j)=\ival{r}{\_} \land {}\\& \hspace{2em}\buf(j) = \ival{r}{v_\l} \\
    \reg(r) & \mathit{if} \ \forall j<i :  \buf(j) \neq \ival{r}{\_} \\
    \bot & \mathit{otherwise}
  \end{cases}
\]
  \caption{Definition of the register resolve function.}
  \label{fig:reg-resolve}
\end{figure}
We extend this definition to values by defining $\regmap{v_\l}=v_\l$
for all $v_\l \in \V$, and lift it to lists of registers or values
using a pointwise lifting.
\subsection{Speculative Constant-Time}
We present our new notion of constant-time security %
in terms of a small-step semantics, which
relates program configurations, observations, and attacker directives.

Our semantics does not directly model caches, nor
any of the predictors used by speculative semantics.
Rather, we model externally visible effects---memory accesses and control flow---by
producing a sequence of \emph{observations}.
We can thus reason about \emph{any} possible cache implementation, as any
cache eviction policy can be expressed as a function of
the sequence of observations.
Furthermore, exposing control flow observations directly in our semantics makes
it unnecessary for us to track various other side channels. Indeed, while
channels such as port contention or register renaming produce
distinct measurable effects~\cite{spectre-v1}, they only serve to leak the path
taken through the code---and thus modeling these observations separately would be
redundant.
For the same reason, we do not model a particular branch prediction
strategy; we instead let the attacker resolve scheduling non-determinism by
supplying a series of \emph{directives}.

This approach has two important consequences.
First, the use of observations and directives allows our semantics to remain
\emph{tractable} and \emph{amenable to verification}. For instance, we do not
need to model the behavior of the cache or any branch predictor.
Second, our notion of speculative constant-time is \emph{robust},
\ie it holds for all possible branch predictors and replacement
policies---assuming that they do not leak secrets directly, a
condition that is achieved by all practical hardware implementations.

Given an attacker directive~$d$, we use
$\smash{C \hspace{.2em}\mathrlap{\hookrightarrow}\hspace{.4em}{}_d^{\raisebox{.1em}{\scriptsize\hspace{.05em}$o$}}\hspace{.6em} C'}$
to denote the execution step from configuration $C$ to configuration $C'$ that
produces observation $o$.
Program execution is defined from the small-step semantics in the usual style.
We use $C \bigstep{D}{O} C'$ to denote a sequence of execution steps from $C$
to $C'$. Here $D$ and $O$ are the concatenation of the single-step directives
and leakages, respectively; $N$ is the number of retired instructions, i.e., $N
= \#\{d \in D \mathop{|} d = \retire\}$.  When such a big step from $C$ to $C'$
is possible, we say $D$ is a \emph{well-formed} schedule of directives for $C$.
We omit $D$, $N$, or $O$ when not used.

\begin{restatable}[Speculative constant-time]{definition}{defSpecONI}
  We say a configuration $C$ with schedule $D$ satisfies
  \emph{speculative constant-time} (SCT) with respect to a
  low-equivalence relation $\loweqv$ iff for every $C'$ such that
  $C \loweqv C'$:
  \[ C \bigstepnon{D}{O} C_1 \text{ iff } C' \bigstepnon{D}{O'} C_1'
    \text{ and } C_1 \loweqv C_1' \text{ and } O = O'. \]
  A program satisfies SCT iff every initial configuration satisfies SCT under
  any schedule.
\end{restatable}

\mypara{Aside, on sequential execution} Processors work hard to
create the illusion that assembly instructions are executed sequentially. We validate our
semantics by proving equivalence with respect to sequential execution. Formally, we define
\emph{sequential schedules} as schedules that execute and retire
instructions immediately upon fetching them. We attach to each
program a canonical sequential schedule and write $C \bigstep{\seq}{}
C'$ to model execution under this canonical schedule.
Our sequential validation is defined relative to an equivalence $\approx$
on configurations. Informally, two configurations are equivalent if
their memories and register files are equal, even if their speculative
states may be different.
\begin{restatable}[Sequential equivalence]{theorem}{thmSeqCorrectness}
\label{thm:seq-correctness}
Let  $C$  be an initial configuration and $D$ a well-formed schedule for $C$.
If $C \bigstep{D}{} C_1$, then $C \bigstep{\seq}{} C_2$ and $C_1 \approx C_2$.
\end{restatable}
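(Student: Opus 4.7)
The plan is to prove \Cref{thm:seq-correctness} by induction on the length of the well-formed schedule $D$, with the induction hypothesis strengthened to track how far sequential execution has progressed. For a prefix $D'$ of $D$, let $k(D')$ denote the number of \retire directives in $D'$ and write $C^{\seq}_{k}$ for the configuration obtained by running the canonical sequential schedule $k$ times from $C$. The invariant I would maintain is: after consuming $D'$ and reaching $C' = (\reg', \mem', \n', \buf')$, the architectural state $(\reg', \mem')$ coincides with $(\reg^{\seq}_{k}, \mem^{\seq}_{k})$ of $C^{\seq}_{k(D')}$, and every transient instruction still in $\buf'$, once fully resolved and retired, will produce exactly the effect sequential execution would next perform on the matching program path. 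Because $\bigstep{\seq}{}$ fetches, executes, and retires instructions in a single combined step, $C^{\seq}_{k}$ is uniquely determined and always has an empty reorder buffer, which is what lets $C_1 \approx C_2$ be established by simply erasing $\buf_1$ at the end.

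The base case ($D'$ empty, $k = 0$) is immediate. For the inductive step I would case split on the next directive $d$. Fetch and execute directives leave $(\reg, \mem)$ untouched and only modify $\buf$, so the architectural half of the invariant is trivially preserved; maintaining the buffer-consistency half follows from the resolution rules together with the definition of $\regmapname$, which ensures that once an in-buffer dependency is resolved, its value is the one a sequential execution would have propagated. The substantive case is $\retire$: here I need to show both that (i) any instruction reaching the head of $\buf$ lies on the correct sequential program path---since every misprediction, whether from a branch, indirect jump, or aliasing predictor, triggers a rollback of the tail of $\buf$ before the offending instruction can be at the head---and (ii) the value it commits to $\reg$ or $\mem$ matches what sequential execution would write next, which is immediate from the buffer-consistency invariant once the retiring instruction is fully resolved.

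The main obstacle I expect is establishing the buffer-consistency half of the invariant uniformly across the various load forms, because the semantics supports both direct memory reads and store-to-load forwarding with the four transient load forms of \Cref{tab:instructions}. In particular, a forwarding execute directive such as $\executeg{i}{j}$ can produce a value that is inconsistent with the sequential order whenever the predicted aliasing is wrong. My argument would rely on the fact that such loads carry a dependency annotation $(v_\l, j)$ recording the forwarding source: whenever the prediction turns out to disagree with the true program-order producer, the corresponding misprediction check must fire---emitting an $\omiss$ observation and rolling back all instructions from index $i$ onward---before the load is ever eligible to retire. Once this property is verified for every instruction form in \Cref{tab:instructions} (loads being the delicate case; branches, jumps, and stores follow the same template but are simpler), the induction closes, and taking $N$ to be the total number of \retire directives in $D$ yields the required $C_2 = C^{\seq}_{N}$ with $C_1 \approx C_2$.
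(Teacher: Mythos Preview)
Your approach is viable but differs meaningfully from the paper's. The paper does not carry a buffer-consistency invariant through a forward induction on the schedule. Instead it first strips all \emph{misspeculated steps} from $D$ (directives whose removal leaves the final configuration unchanged), obtaining a rollback-free schedule $D^*$ with the same $C_1$. It then inducts on $N$, the number of $\retire$s: it partitions $D^*$ into $D_1'$ (all directives applying to the indices of the first $N{-}1$ retired instructions) and the remainder $D_1''$, argues that $D_1''$ can be reordered after $D_1'$ because without misspeculation later-index directives cannot influence earlier-index ones, applies the inductive hypothesis to $D_1'$, and finishes using a separate lemma that any two sequential schedules from the same initial configuration coincide.

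The practical difference is that the paper's ``remove misspeculated steps first'' move eliminates exactly the obstacle you identify as the hard part. Once the schedule has no rollbacks, every resolved value in $\buf$ is already the sequential value, so there is no need for a conditional invariant of the form ``once fully resolved and retired, will produce the correct effect.'' Your invariant, as stated, quantifies over future behavior (``once \ldots retired'') and is close to what you are trying to prove; to make your route go through you would need to sharpen it to a non-circular statement about the current buffer---for instance, that every resolved entry either already holds the sequential value or sits strictly above an unresolved branch/store/aliased load whose eventual execution must roll it back. That is provable, but it requires a careful case analysis across all hazard-detection rules (\textsc{store-execute-addr-hazard}, \textsc{load-execute-addr-hazard}, \textsc{load-execute-addr-mem-hazard}, the branch rules), whereas the paper's decomposition pushes all of that work into the single observation that rolled-back directives can be excised up front.
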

Complete definitions, more properties, and proofs are given in \Cref{sec:full-proofs}.

\subsection{Overview of the Semantics} 

As shown in \Cref{tab:instructions},
each instruction has a \emph{physical} form and one or more \emph{transient} forms.
Our semantics operates on these instructions similar to a multi-stage processor pipeline.
Physical instructions are \emph{fetched} from memory and become
transient instructions in the reorder buffer. They are then \emph{executed} until they
are fully resolved. Finally they are \emph{retired}, updating the non-speculative state in the configuration.

In the rest of this section, we show how we
model speculative execution (\Cref{sec:cond}), memory operations
(\Cref{sec:forwarding}), aliasing prediction
(\Cref{sec:alias}), and fence instructions (\Cref{sec:lfence}). 
We also briefly describe indirect jumps and function calls (\Cref{sec:indirect}),
which are presented in full in \Cref{sec:extended-semantics}.

Our semantics captures a variety of existing Spectre variants, including v1 (\Cref{fig:v1-attack}), v1.1
(\Cref{fig:v1.1-attack}), and v4
(\Cref{fig:v4-attack}), as well as a new hypothetical variant (\Cref{fig:vnew-attack}).
Additional variants (\eg v2 and \emph{ret2spec}) can be expressed with the extended semantics given in \Cref{sec:extended-semantics}.
Our semantics shows that these attacks violate SCT by producing observations depending on secrets.

\subsection{Speculative Execution}
\label{sec:cond}
We start with the semantics for \emph{conditional branches} which
introduce speculative execution.

\mypara{Conditional branching}
The physical instruction for conditional branches has the form
$\acond{\condop}{\vec \rv}{n^\ktrue}{n^\kfalse}$,
 where $\condop$
is a Boolean operator whose result determines whether or not to
execute the jump, $\vec \rv$ are the operands to $\condop$, and $n^\ktrue$
and $n^\kfalse$ are the program points for the \emph{true} and
\emph{false} branches, respectively.

We show $\cond$'s transient counterparts in Table~\ref{tab:instructions}.
The unresolved form extends the physical instruction with a program point $n_0$, which is
used to record the branch that is executed ($n^\ktrue$ or $n^\kfalse$)
speculatively, and may or may not correspond to the branch that is actually taken once
$\condop$ is resolved. The resolved form contains the final jump target.

\mypara{Fetch}
We give the rule for the fetch stage below.
\begin{mathpar}
  \infer[cond-fetch]
  { \mem(\n) = \acond{\condop}{\vec \rv}{n^\ktrue}{n^\kfalse} \\
    i = \freshi \\\\
    \buf' = \buf[i \aug \icond{\condop}{\vec \rv}{n^\ktrue}{n^\ktrue}{n^\kfalse}] }
  { \specstep{\fetchg \ktrue} (\reg, \mem, \n^\ktrue, \buf') }
\end{mathpar}
The \textsc{cond-fetch} rule speculatively executes the branch determined
by a Boolean value $b$ given by the directive. We
show the case for $b=\ktrue$; the case for $\kfalse$ is analogous.
The rule updates the current program point~$n$, allowing
execution to continue along the specified branch. The rule then
records the chosen branch
$n^\ktrue$ (resp.~$n^\kfalse$) in the transient jump instruction.

This semantics models the behavior of most modern processors.
Since the target of the branch cannot be resolved in the
fetch stage, speculation allows execution to
continue and not stall until the branch target is resolved.
In hardware, a branch predictor chooses which branch to execute; in our
semantics, the directives $\fetchg \ktrue$ and $\fetchg \kfalse$ determine
which of the rules to execute.
This allows us to abstract over all possible predictor implementations.

\mypara{Execute}
Next, we describe the rules for the execute stage.
\begin{mathpar}
  \infer[cond-execute-correct]
  { \buf(i) = \icond{\condop}{\vec \rv}{n_0}{n^\ktrue}{n^\kfalse} \\
    \later{\forall j < i : \buf(j) \neq \lfence} \\
    \regmap{\vec \rv} = \vec{v_\l} \\
    \opf{\vec{v_\l}} = \ktrue_\l \\
    n^\ktrue = n_0 \\
   \buf' = \buf[i \aug \ijmp n^\ktrue] }
  { \specstep[\ojump n^\ktrue_\l]{\executei} (\reg, \mem, \n, \buf') }
\end{mathpar}\begin{mathpar}
  \infer[cond-execute-incorrect]
  { \buf(i) = \icond{\condop}{\vec \rv}{n_0}{n^\ktrue}{n^\kfalse} \\
    \later{\forall j < i : \buf(j) \neq \lfence} \\
    \regmap{\vec \rv} = \vec{v_\l} \\
    \opf{\vec{v_\l}} = \ktrue_\l \\
    n^\ktrue \neq n_0 \\
    \buf' = \buf[j : j < i][i \aug \ijmp{n^\ktrue}] }
  { \specstep[\omiss, \ojump n^\ktrue_\l]{\executei} (\reg, \mem, \n^\ktrue, \buf') }
\end{mathpar}
Both rules evaluate the condition $\condop$ via an evaluation
function $\ll \cdot \rr$. In both, the function produces $\ktrue$; but the
$\kfalse$ rules are analogous. The rules then compare the
actual branch target $n_\ktrue$ against the speculatively chosen
target~$n_0$ from the fetch stage.

If the \emph{correct} path was chosen during speculation, \ie $n_0$ agrees
with the correct branch $n^\ktrue$, rule \textsc{cond-execute-correct}
updates $\buf$ with the fully resolved $\jump$ instruction and
emits an observation: $\ojump n^\ktrue_\l$. This models
an attacker that can observe control flow, \eg by timing executions
along different paths.
The leaked observation $n^\ktrue$ has label $\l$, propagated from the evaluation
of the condition.

In case the \emph{wrong} path was taken during speculation, \ie the
calculated branch $n^\ktrue$ \emph{disagrees} with $n_0$, the semantics must
roll back all execution steps along the erroneous path.
For this, rule \textsc{cond-execute-incorrect} removes all entries in
$\buf$ that are newer than the current instruction (\ie all entries
$j \geq i$), sets the program point $\n$ to the correct branch, and
updates $\buf$ at index $i$ with correct value for the resolved
$\jump$ instruction.
Since an attacker can observer misspeculation through instruction
timing~\cite{spectre-v1}, the rule issues a $\omiss$ observation in addition
to the $\ojump$ observation.

\mypara{Retire}
The rule for the retire stage is shown below; its only effect
is to remove the $\jump$ instruction from the
buffer.
\begin{mathpar}
  \infer[jump-retire]
  { \retirable{i} \\
    \buf(i) = \ijmp{n_0} \\
    \buf' = \rmi{\buf} }
  { \specstep{\retire} (\reg, \mem, \n, \buf') }
\end{mathpar}

\begin{figure}
  \small
  (a) Predicted correctly \\
  \begin{tabular}{lll}
    $ i~ $        & Initial $ \buf(i) $                                              & $ \buf(i) $ after  $\textsf{execute}~\idxnum 4$\  \\\hline
    $ \idxnum 3 $ & $ \ival{r_b}{4} $                            & $\ival{r_b}{4} $ \\
    $ \idxnum 4 $ & $ \icond{\texttt{<}}{(2,r_a)}{\pcnum 9}{\pcnum 9}{\pcnum {12}} $ & $ \ijmp \pcnum 9 $ \\
    $ \idxnum 5 $ & $ \iop{\texttt{+}}{r_c}{(1,r_b)} $                            & $ \iop{\texttt{+}}{r_c}{(1,r_b)} $ \\
  \end{tabular}
      \vspace{1em} \\
  (b) Predicted incorrectly \\
  \begin{tabular}{lll}
    $ i~ $        & Initial $ \buf(i) $                                                 & $ \buf(i) $ after  $\textsf{execute}~\idxnum 4$\\\hline
    $ \idxnum 3 $ & $ \ival{r_b}{4} $                            & $ \ival{r_b}{4}$ \\
    $ \idxnum 4 $ & $ \icond{\texttt{<}}{(2,r_a)}{\pcnum {12}}{\pcnum 9}{\pcnum {12}} $ & $ \ijmp \pcnum 9 $ \\
    $ \idxnum 5 $ & $ \iop{\texttt{*}}{r_d}{(r_g,r_h)} $                            & - \\
  \end{tabular}

  \caption{
    Correct and incorrect branch prediction.
    Initially, $r_a=3$.
    In (b), the misprediction causes a rollback to $\idxnum 4$.
  }
  \label{fig:cond-misspec}
\end{figure}

\mypara{Examples}
\Cref{fig:cond-misspec} shows how branch prediction affects the
reorder buffer. In part (a), the branch at index $\idxnum 4$ is
predicted correctly. The jump instruction is resolved, and execution
proceeds as normal.
In part (b), the branch at index $\idxnum 4$ is incorrectly
predicted. Upon executing the branch, the misprediction is detected,
and $\buf$ is rolled back to index $\idxnum 4$.

\subsection{Memory Operations}
\label{sec:forwarding}

The physical instruction for loads is $\aload{r}{\vec \rv}{\n'}$,
while the form for stores is $\astore{\rv}{\vec \rv}{\n'}$.
As before, $\n'$ is the program point of the next instruction.
For $\load$s, $r$ is the register receiving the result; for $\store$s,
$\rv$ is the register or value to be stored.
For both $\load$s and $\store$s, $\vec \rv$ is a list of operands (registers
and values) which are used to calculate the operation's target
address.

Transient counterparts of $\load$ and $\store$ are given in Table~\ref{tab:instructions}. 
We annotate unresolved $\load$ instructions with the program point of the
physical instruction that generated them; we omit annotations whenever
  not used.
Unresolved and resolved $\store$ instructions share the same syntax,
but for resolved $\store$s, both
address and operand are required to be single values.

\mypara{Address calculation}
We assume an arithmetic operator $\addr$ which
calculates target addresses for $\store$s and $\load$s from its operands.
We leave this operation abstract in order to model a large variety of
architectures.
For example, in a simple
addressing mode, $\addrf{\vec v} $ might compute the sum of its
operands; in an x86-style address mode, $\addrf{[v_1,v_2,v_3]}$
might instead compute $v_1 + v_2 \cdot v_3$.

\mypara{Store forwarding}
Multiple transient $\load$ and $\store$ instructions may exist concurrently in
the reorder buffer.
In particular, there may be unresolved $\load$s and $\store$s that will read
or write to the same address in memory.
Under a naive model, we must wait to execute $\load$ instructions
until all prior $\store$ instructions have been retired, in case
they write to the address we will $\load$ from. Indeed, some real-world processors
behave exactly this way~\cite{chen_baer_1992}.

For performance, most modern processors implement \emph{store-forwarding} for
memory operations:
if a
$\load$ reads from the same address as a prior $\store$ and the $\store$ has
already been resolved, the
processor can \emph{forward} the resolved value to the $\load$. The $\load$ can then proceed without
waiting for the $\store$ to commit to
memory~\cite{wong14storetoload}.

To model these store forwarding semantics, we use
annotations to recall if a $\load$ was resolved from memory or forwarding.
A resolved $\load$ has the form $\ival{\rf}{v_\l{\{j, \a\}}}^n$,
where the index $j$ records either the buffer index of the $\store$
instruction that forwarded its value to the $\load$, or $\bot$ if
the value was taken from memory.
We also record the memory address $\a$ associated with the data,
and retain the program point $\n$ of the $\load$
instruction that generated the value instruction.
The resolved $\load$ otherwise behaves as a resolved value instruction
(\eg for the register resolve function).

\mypara{Fetch}
We now discuss the inference rules for memory operations, starting
with the fetch stage.
\begin{mathpar}
  \infer[simple-fetch]
  { \mem(\n) \in \{ \op, \load, \store, \later{\lfence} \} \\
    n' = \mathit{next}(\mem(\n)) \\
    i = \freshi \\
    \buf' = \buf[i \aug \mathit{transient}(\mem(\n))] }
  { \specstep{\fetch} (\reg, \mem, \n', \buf') }
\end{mathpar}
Given a $\fetch$ directive, rule \textsc{simple-fetch} extends the
reorder buffer $\buf$ with a new transient instruction (see
Table~\ref{tab:instructions}). Other than $\load$ and $\store$, the rule
also applies to $\op$ and $\lfence$ instructions.
The $\mathit{transient(\cdot)}$
function simply translates the physical instruction at $\mem(\n)$ to
its unresolved transient form.
It inserts the new, transient instruction at the first empty index in $\buf$,
and sets the current program point to the next instruction~$n'$.
Note that $\mathit{transient(\cdot)}$ annotates the transient $\load$
instruction with its program point.

\mypara{Load execution}
Next, we cover the rules for the $\load$ execute stage.
\begin{mathpar}
  \infer[load-execute-nodep]
  { \buf(i) = \iload{\rf}{\vec \rv}^n  \and
   \later{ \forall j < i : \buf(j) \neq \lfence} \\
    \regmap{\vec \rv} = \vec {v_\l} \and
    \addrf{\vec{v_\l}} = \a \\
    \l_a = \join \vec \l\\
    \forall j < i : \buf(j) \neq \istore{\_}{\a} \\
    \mem(\a) = v_{\l}  \\
    \buf' = \buf[i \aug \ival{\rf}{v _{\l}\{\bot, \a \}}^n] }
  { \specstep[\oread {\a_{\l_a}}]{\executei} (\reg, \mem, \n, \buf') }
\end{mathpar}\begin{mathpar}
  \infer[load-execute-forward]
  { \buf(i) = \iload{\rf}{\vec \rv}^n \and
   \later{ \forall j < i : \buf(j) \neq \lfence} \\
    \regmap{\vec \rv} = \vec{v _ \l} \and
    \addrf{\vec{v _ \l}} = \a \\
    \l_a = \join \vec \l\\
     max(j) < i : \buf(j) = \istore{\_}{\a} \wedge \buf(j)=\istore{v_\l, \a } \\
    \buf' = \buf[i \aug \ival{\rf}{v_{\l} \{j, \a\}} ^n] }
  { \specstep[\ofwd \a_{\l_a}]{\executei} (\reg, \mem, \n, \buf') }
\end{mathpar}
Given an $\executex$ directive for buffer index~$i$, under the
condition that $i$ points to an unresolved $\load$,
rule~\textsc{load-execute-nodep} applies if there are no prior
$\store$ instructions in $\buf$ that have a resolved, matching
address.
The rule first resolves the operand list~$\vec{\rv}$
into a list of values~$\vec{v_\l}$, and then uses $\vec{v_\l}$ to calculate the target
address $\a$. It then retrieves the current value~$v_\l$ at address $a$ from
memory, and finally adds to the buffer a resolved value instruction assigning
$v_\l$ to the target register $r$.
We annotate the value instruction with the address $a$ and
$\bot$, signifying that the value comes from memory.
Finally, the rule produces the observation
$\oread{\a_{\l_a}}$, which renders the memory read at address $\a$
with label $\l_a$ visible to an attacker.

Rule \textsc{load-execute-forward} applies if the most recent
$\store$ instruction in $\buf$ with a resolved, matching address has a
resolved data value.
Instead of accessing memory, the rule forwards the value from the
$\store$ instruction, annotating the new value instruction with the
calculated address $a$ and the index $j$ of the originating $\store$ instruction.
The rule produces a
$\ofwdx$ observation with the labeled address $a_{\l_a}$.
This observation captures that the attacker can determine (\eg by
observing the \emph{absence} of memory access using a cache timing attack)
that a
forwarded value from address $a$ was found in the buffer instead of loaded from memory.

Importantly, neither of the rules has to wait for prior $\store$s to
be resolved and can proceed speculatively. This can lead to
memory hazards when a more recent store to the $\load$'s address has not been
resolved yet; we show how to deal with hazards in the rules for the $\store$ instruction.

\mypara{Store execution}
We show the rules for $\store$s below.
\begin{mathpar}
  \infer[store-execute-value]
  { \buf(i) = \istore{\rv}{\vec \rv} \\
    \later{\forall j < i : \buf(j) \neq \lfence} \\
    \regmap{\rv} = v_\l  \\
    \buf' = \buf[i \aug \istore{v_\l}{\vec \rv}] }
  { \specstep{\executeivalue} (\reg, \mem, \n, \buf') }
\end{mathpar}\begin{mathpar}
  \infer[{store-execute-addr-ok}]
  { \buf(i) = \istore{\rv}{\vec \rv} \\
    \later{\forall j < i : \buf(j) \neq \lfence} \\
    \regmap{\vec \rv} = \vec {v_\l} \\
    \addrf{\vec{v_\l}} = \a \and   \l_a = \join \vec \l \\
    \forall k > i : \buf(k) = (r= \ldots \{j_k, \a_k\}) :  \\~
    ({\a}_k = \a \implies j_k \geq i)
      \later{{}\land~ (j_k = i \implies {\a}_k = \a)} \\\\
    \buf' = \buf[i \aug \istore{\rv}{\a_{\l_\a}}] }
  { \specstep[\ofwd \a_{\l_a}]{\execute{i:\,\textsf{addr}}} (\reg, \mem, \n, \buf') }
\end{mathpar}\begin{mathpar}
  \infer[store-execute-addr-hazard]
  { \buf(i) = \istore{\rv}{\vec \rv} \\
    \later{\forall j < i : \buf(j) \neq \lfence }\\
    \regmap{\vec \rv} = \vec{v_\l} \\
    \addrf{\vec{v_\l}} = \a\and   \l_a = \join \vec \l\\\\
     min(k) > i : \buf(k) = (r= \ldots \{j_k, \a_k \})^{n_k} : \\~
    (\a_k = \a \land j_k < i)
      \later{{}\lor~ (j_k = i \land \a_k \neq \a)} \\\\
    \buf' = \buf[j : j < k][i \aug \istore{\rv}{\a_{\l_\a}}] }
  { \specstep[\omiss,\ofwd \a_{\l_a}]{\execute{i:\,\textsf{addr}}} (\reg, \mem, n_k, \buf') }
\end{mathpar}
The execution of $\store$ is split into two steps: value resolution,
represented by the directive $\executeivalue$, and address resolution,
represented by the directive $\executeiaddr$;
a schedule may have either step first.
Either step may be skipped if data or address are already in
immediate form.

Rule \textsc{{store-execute-addr-ok}} applies if no misprediction
has been detected, \ie if no $\load$ instruction forwarded data from an outdated $\store$.
We check this by requiring that all value instructions \emph{after} the current index (indices $k > i$)
with an address $\a$ matching the current $\store$ must be using a value forwarded from a $\store$
\emph{at least as recent} as this one ($a_k = a \implies j_k \geq i$).
We define~$\bot < n$ for any index $n$---that is, if a future $\load$ matches the address of
the current $\store$ but loaded its value from memory, we consider this a hazard.

If there is indeed a hazard, \ie if there was a resolved
load with an outdated value, the rule \textsc{store-execute-addr-hazard} picks
the \emph{earliest} such instruction (index~$k$) and restarts execution by resetting
the instruction pointer to the program point $n_k$ of this instruction. It then discards all
transient instructions at indices at least $k$ from the
reorder buffer.
As in the case of misspeculation, the rule issues a $\omiss$ observation.

\mypara{Retire}
Resolved $\load$s are retired using the following rule.
\begin{mathpar}
  \infer[value-retire]
  { \retirable{i} \\
    \buf(i) = \ival{\rf}{v_\l} \\
    \reg' = \reg[r \aug v_\l] \\
    \buf' = \rmi{\buf} }
  { \specstep{\retire} (\reg', \mem, \n, \buf') }
\end{mathpar}
This is the same retire rule used for simple value instructions (\eg resolved $\op$ instructions).
The rule updates the register map $\reg$ with the new value, and removes the
instruction from the reorder buffer.

Stores are retired using the rule below.
\begin{mathpar}
  \infer[store-retire]
  { \retirable{i} \\
    \buf(i) = \istore{v_{\l} }{\a_{\l_a} } \\
    \mem' = \mem[\a \aug v _{\l} ] \\
    \buf' = \rmi{\buf} }
  { \specstep[\owrite \a_{\l_a} ]{\retire} (\reg, \mem', \n, \buf') }
\end{mathpar}
A fully resolved $\store$ instruction retires similarly to a value
instruction.
However, instead of updating the register map $\reg$,
rule~\textsc{store-retire} updates the memory $\mem$.
Since an attacker can observe memory writes, the rule produces the
observation $\owrite{\a_{\l_a}}$ with the labeled address of the $\store$.

\begin{figure}
  \small
  \centering

  \begin{tabular}{ll} 
  Registers & $\reg(r_a) = 40_\kpub$ \\
   Directives & D= $ \execute{\idxnum 4};  \executeaddr{\idxnum 3} $   \\
    Leakage for D&  $\ofwd \mathtt{43}_\kpub ; \omiss,\ofwd \mathtt{43}_\kpub $
  \end{tabular}
  \\
  
    \vspace{0.5em}

\begin{tabular}{lll}
 starting $\buf$      &  $\buf$ after     $ \execute{\idxnum 4}$             & $\buf$ after   D                            \\ 
 \hline
 $\idxnum 2~\istore{12}{\mathtt{43}_\kpub} $ & $\idxnum 2 ~\istore{12}{\mathtt{43}_\kpub} $   & $\idxnum 2 ~ \istore{12}{\mathtt{43}_\kpub} $   \\
 $\idxnum 3 ~ \istore{20}{[\mathtt{3},r_a]} $     & $\idxnum 3 ~\istore{20}{[\mathtt{3},r_a]}$         & $\idxnum 3 ~ \istore{20}{\mathtt{43}_\kpub}$  \\
 $\idxnum 4 ~ \iload{r_c}{[\mathtt{43}]} $         & $ \idxnum 4~ \ival{r_c}{12\{\idxnum 2,\mathtt{43}\}}$    \\

  \end{tabular}

\caption{Store hazard caused by late execution of $\store$ addresses. The $\store$ address
  for $\idxnum 3$ is resolved
  too late, causing the later $\load$ instruction to forward from the
  wrong $\store$. When $\idxnum 3$'s address is resolved, the execution must be rolled back.
  In this example, $\addrf{\cdot}$ adds its arguments.}
\label{fig:store-hazard}
\end{figure}

\begin{figure}
  \small
  \centering

  \begin{tabular}{llcrl}

    \multicolumn{2}{c}{Registers}                      && \multicolumn{2}{c}{Reorder buffer} \\
    $~r $               & $ \reg(r) $                  && $i~$        & $ \buf(i) $ \\
    \cline{1-2} \cline {4-5}
    $r_a$               & $ 5_\kpub $                  && $\idxnum 1 $ & $ \icond{\texttt{>}}{(4, r_a)}{\pcnum 2}{\pcnum 2}{\pcnum 4} $ \\
    $r_b$               & $ x_\ksec $                  && $\idxnum 2 $ & $ \istore{r_b}{[\mathtt{40}, r_a]} $ \\
    \multicolumn{2}{c}{Memory}                         &&              & \dots \\
    $~a $               & $ \mem(a) $                  && $\idxnum 7 $ & $ \iload{r_c}{[\mathtt{45}]} $ \\
    \cline{1-2}
    $ \mathtt{40..43} $ & $ \textit{secretKey}_\ksec $ && $\idxnum 8 $ & $ \iload{r_c}{[\mathtt{48}, r_c]} $ \\
    $ \mathtt{44..47} $ & $ \textit{pubArrA}_\kpub $   && \\
    $ \mathtt{48..4B} $ & $ \textit{pubArrB}_\kpub $   && \\

  \end{tabular}
  \vspace{0.5em}\\
  \begin{tabular}{lll}

    Directive                    & Effect on $\buf$                                                   & Leakage \\\hline
    $ \executeaddr{\idxnum 2} $  & $ \idxnum 2 \mapsto \istore{r_b}{\mathtt{45}_\kpub} $              & $ \ofwd \mathtt{45}_\kpub $ \\
    $ \executevalue{\idxnum 2} $ & $ \idxnum 2 \mapsto \istore{x_\ksec}{\mathtt{45}_\kpub} $          & \\
    $ \execute{\idxnum 7} $      & $ \idxnum 7 \mapsto \ival{r_c}{x_\ksec\{\idxnum 2,\mathtt{45}\}} $ & $ \ofwd \mathtt{45}_\kpub $ \\
    $ \execute{\idxnum 8} $      & $ \idxnum 8 \mapsto \ival{r_c}{X\{\bot,a\}} $                      &  \color{darkred}{ $\oread a_\ksec $} \\

  \end{tabular}

  where $a = x_\ksec + \mathtt{48}$
  \caption{
    Example demonstrating a store-to-load Spectre v1.1 attack.
    A speculatively stored value is forwarded and then leaked
    using a subsequent load instruction.
  }
\label{fig:v1.1-attack}
\end{figure}

\begin{figure}
  \small
  \centering

  \begin{tabular}{llcrl}

    \multicolumn{2}{c}{Registers}                    && \multicolumn{2}{c}{Reorder buffer} \\
    $~r $               & $ \reg(r) $                && $i~$         & $ \buf(i) $ \\
    \cline{1-2} \cline {4-5}
    $r_a$               & $ 40_\kpub $               && $\idxnum 2 $ & $ \istore{0}{[\mathtt{3},r_a]} $ \\
    \multicolumn{2}{c}{Memory}                       && $\idxnum 3 $ & $ \iload{r_c}{[\mathtt{43}]} $ \\
    $~a $               & $ \mem(a) $                && $\idxnum 4 $ & $ \iload{r_c}{[\mathtt{44}, r_c]} $ \\
    \cline{1-2}
    $ \mathtt{40..43} $ & $ \textit{secretKey}_\ksec $ && & \\
    $ \mathtt{44..47} $ & $ \textit{pubArrA}_\kpub $ && & \\

  \end{tabular}
  \vspace{0.5em}\\
  \begin{tabular}{lll}

    Directive                   & Effect on $\buf$                                                            & Leakage \\\hline
    $ \execute{\idxnum 3} $     & $ \idxnum 3 \mapsto \ival{r_c}{\textit{secretKey}[3]\{\bot,\mathtt{43}\}} $ & $ \oread \mathtt{43}_\kpub $ \\
    $ \execute{\idxnum 4} $     & $ \idxnum 4 \mapsto \ival{r_c}{X\{\bot,a\}} $                               & \color{darkred}{$ \oread a_\ksec $ } \\
    \multirow{2}{*}{$ \executeaddr{\idxnum 2} $}
            & $ \{\idxnum 3,\idxnum 4\} \notin \buf $
            & $ \omiss, $ \\
    & $ \idxnum 2 \mapsto \istore{0}{\mathtt{43}_\kpub} $
            & $ ~\ofwd \mathtt{43}_\kpub  $\\

  \end{tabular}

  where $a = \textit{secretKey}[3]_\ksec + \mathtt{44}$
\caption{Example demonstrating a v4 Spectre attack. The $\store$ is executed
  too late, causing later $\load$ instructions to use outdated values. }
\label{fig:v4-attack}
\end{figure}

\mypara{Example}
\Cref{fig:store-hazard} gives an example of store-to-load forwarding. In the
starting configuration, the $\store$ at index $\idxnum 2$ is fully resolved,
while the $\store$ at index $\idxnum 3$ has an unresolved address.
The first directive executes the $\load$ at $\idxnum 4$. This
$\load$ accesses address $\mathtt{43}$, which matches the $\store$ at
index $\idxnum 2$. Since this is the most recent such $\store$ and has a
resolved value, the $\load$ gets the value $12$ from this $\store$.
The following directive resolves the address of the $\store$ at index $\idxnum 3$.
This $\store$ also matches address $\mathtt{43}$. As this $\store$ is more recent than
$\store$ $\idxnum 2$, this directive triggers a hazard for the $\load$ at
$\idxnum 4$, leading to the rollback of the $\load$ from the reorder buffer.

\mypara{Capturing Spectre}
We now have enough machinery to capture several variants of Spectre attacks.

We discussed how our semantics model Spectre v1 in \Cref{sec:motivating}
(\Cref{fig:v1-attack}).
\Cref{fig:v1.1-attack} shows a simple disclosure gadget using forwarding from
an out-of-bounds write. In this example, a secret value $x_\ksec$ is supposed
to be written to $\textit{secretKey}$ at an index $r_a$ as long as $r_a$ is
within bounds. However, due to branch misprediction, the $\store$ instruction
is executed despite $r_a$ being too large. The $\load$ instruction at index
$\idxnum 7$, normally benign, now aliases with the $\store$ at index $\idxnum
2$, and receives the secret $x_\ksec$ instead of a public value from
$\textit{pubArrA}$. This value is then used as the address of another $\load$
instruction, causing $x_\ksec$ to leak.

\Cref{fig:v4-attack} shows a Spectre v4 vulnerability caused
when a $\store$ \emph{fails} to forward to a future $\load$. In this example,
the $\load$ at index $\idxnum 3$ executes before the $\store$ at $\idxnum 2$
calculates its address. As a result, this execution loads the
outdated secret value at address $\mathtt{43}$ and leaks it, instead of using
the public zeroed-out value that would be written.

\subsection{Aliasing Prediction}
\label{sec:alias}
We extend the memory semantics from the previous section to model aliasing
prediction by introducing a new transient instruction $\iload{\rf}{\vec \rv,
( v_\l, j)}^n$. This instruction represents a \emph{partially resolved} load with
speculatively forwarded data.
As before, $\rf$ is the target register, $\vec \rv$ is the list of
arguments for address calculation, and $n$ is the program point of the physical $\load$ instruction.  The new parameters are $v_\l$, the
forwarded data, and $j$, the index of the originating $\store$.

\mypara{Forwarding via prediction}
\begin{mathpar}
  \infer[load-execute-forwarded-guessed]
  { \buf(i) = \iload{\rf}{\vec \rv}^n \\
    j < i \\
    \later{\forall k < i : \buf(k) \neq \lfence} \\
    \buf(j) = \istore{v_{\l}}{\vec \rv_j} \\
    \buf' = \buf[i \aug \iload{\rf}{\vec \rv, ( v_{\l}, j)} ^n] }
  { \specstep{\executeig{j}} (\reg, \mem, \n, \buf') }
\end{mathpar}
Rule \textsc{load-execute-forwarded-guessed} implements forwarding in
the presence of unresolved target addresses.
Instead of forwarding the value from a $\store$ with a matching
address, as in \Cref{sec:forwarding}, the attacker can now freely
choose to forward from \emph{any} $\store$ with a resolved value---even if its
target address is not known yet.
Given a choice of which $\store$ $j$ to forward from---supplied via directive---the
rule updates the reorder buffer with the new partially resolved
$\load$ and records both the forwarded value $v_l$ and the buffer index $j$ of
the $\store$ instruction.

\mypara{Register resolve function}
We extend the register resolve function
$\regmapname$ to allow using values from partially resolved $\load$s.
In particular, whenever the register resolve function computes the latest
resolved assignment to some register~$r$, it now considers not only
fully resolved value instructions, but also our
new partially resolved $\load$: whenever the latest assignment in the
buffer is a partially resolved $\load$, the register resolve function
returns its value.

We now discuss the execution rules, where partially resolved $\load$s
may fully resolve against either the originating $\store$ or against memory.

\mypara{Resolving when originating store is in the reorder buffer}
\begin{mathpar}
  \infer[load-execute-addr-ok]
  { \buf(i) = \iload{\rf}{\vec \rv, ( v_{\l}, j )} ^n \\
    \regmap{\vec \rv} = \vec {v _\l} \\
    \addrf{\vec{v_\l}} = \a \\
    \l_a = \join \vec \l\\
    \buf(j) = \istore{v_{\l}}{\vec \rv_j} \land (\vec \rv_j = \a' \implies
     \a'=\a ) \\
    \forall k :(j<k<i) : \buf(k) \neq \istore{\_}{\a} \\
    \buf' = \buf[i \aug \ival{\rf}{v_{\l} \{j, \a\}} ^n] }
  { \specstep[\ofwd \a_{\l_a}]{\executei} (\reg, \mem, \n, \buf') }
\end{mathpar}\begin{mathpar}
  \infer[load-execute-addr-hazard]
  { \buf(i) = \iload{\rf}{\vec \rv, (v_{\l}, j )} ^{n'} \\
    \regmap{\vec \rv} = \vec {v_\l} \\
    \addrf{\vec{v_\l}} = \a \\
    \l_a = \join \vec \l\\
    ( \buf(j) = \istore{v_{\l}}{\a'} \land \a' \neq \a ) \mathrel{\lor} \\
      (\exists k: j<k<i \land \buf(k) = \istore{\_}{\a}) \\
    \buf' = \buf[j : j < i] }
  { \specstep[\omiss,\ofwd \a_{\l_a}]{\executei} (\reg, \mem, \n', \buf') }
\end{mathpar}
To resolve  $\iload{\rf}{\vec \rv, ( v_{\l}, j )} ^n$ when
  its originating $\store$ is still in $\buf$,
 we calculate the $\load$'s actual
target address $a$ and compare it against the target address of the
originating $\store$ at $\buf(j)$.
If the $\store$ is not followed by later $\store$s to $a$, and either
\textbf{(1)} the $\store$'s address is resolved and its address
is indeed~$a$, or \textbf{(2)} the $\store$'s address is still unresolved, we
update the reorder buffer with an annotated value instruction (rule \textsc{load-execute-addr-ok}).

If, however, either the originating $\store$ resolved to a \emph{different}
address (mispredicted aliasing) or a later $\store$ resolved to the same
address (hazard), we roll back our execution to just before the $\load$ (rule \textsc{load-execute-addr-hazard}).

We allow the $\load$ to execute even if the originating $\store$
has not yet resolved its address. When the $\store$ does finally resolve its
address, it must check that the addresses match and that the forwarding was correct.
The gray formulas in \textsc{store-execute-addr-ok} and \textsc{store-execute-addr-hazard} (\Cref{sec:forwarding})
perform these checks: For forwarding to be correct, all values forwarded from a $\store$ at
$\buf(i)$ must have a matching annotated address ($\forall k > i: j_k = i \implies a_k = a$).
Otherwise, if any value annotation has a mismatched address, then the instruction is rolled back
($j_k = i \land a_k \neq a$).

\mypara{Resolving when originating store is not in the buffer}
We~ must also consider the case where we have delayed resolving the $\load$ address
to the point where the originating $\store$ has already retired, and is no longer
available in $\buf$. If this is the case, and no other prior $\store$ instructions
have a matching address, then we must check the forwarded data against memory.

\begin{mathpar}
  \infer[load-execute-addr-mem-match]
  { \buf(i) = \iload{\rf}{\vec \rv, v_{\l}, j} ^n \\
    j \notin \buf \\
    \regmap{\vec \rv} = \vec {v_\l} \\
    \l_a = \sqcup \vec \l \\
    \addrf{\vec{v_\l}} = \a \\
    \forall k < i : \buf(k) \neq \istore{\_}{\a} \\
    \mem(\a) = v_{\l} \\
    \buf' = \buf[i \aug \ival{\rf}{v_{\l} \{\bot, \a\}} ^n] }
  { \specstep[\oread {\a_{\l_a}}]{\executei} (\reg, \mem, \n, \buf') }
\end{mathpar}\begin{mathpar}
  \infer[load-execute-addr-mem-hazard]
  { \buf(i) = \iload{\rf}{\vec \rv, v_{\l}, j} ^{n'} \\
    j \notin \buf \\
    \regmap{\vec \rv} = \vec {v_\l} \\
    \l_a  = \sqcup \vec \l \\
    \addrf{\vec v_\l} = \a \\
    \forall k < i : \buf(k) \neq \istore{\_}{\a} \\
    \mem(\a) = v'_{\l'} \\
    v'_{\l'} \neq v_\l \\
    \buf' = \buf[j : j < i] }
  { \specstep[\omiss,\oread {\a_{\l_a}}]{\executei} (\reg, \mem, \n', \buf') }
\end{mathpar}

If the originating $\store$ has retired, and no intervening $\store$s match the same address,
we must load the value from memory to ensure we were
originally forwarded the correct value.
If the value loaded from memory matches the value we were forwarded,
we update the reorder buffer with a resolved $\load$ annotated as if it had
been loaded from memory (rule \textsc{load-execute-addr-mem-match}).

If a $\store$ \emph{different} from the originating $\store$ overwrote
the originally forwarded value, the value loaded from memory may not
match the value we were originally forwarded. In this case we roll
back execution to just before the $\load$ (rule
\textsc{load-execute-addr-mem-hazard}).

We demonstrate these semantics in the 
attack shown in \Cref{fig:vnew-attack}.
An earlier draft of this paper~\cite{arxiv:v2} incorrectly claimed to have a
proof-of-concept exploit for this attack on real hardware.

\subsection{Speculation Barriers}
\label{sec:lfence}

We extend our semantics with a \emph{speculation
barrier} instruction, $\alfence \n$, that prevents further speculative
execution until all prior instructions have been retired.

\begin{mathpar}
  \infer[fence-retire]
  { \retirable{i} \\
    \buf(i) = \ilfence \\
    \buf' = \rmi{\buf} }
  { \specstep{\retire} (\reg, \mem, \n, \buf') }
\end{mathpar}

The $\lfence$ instruction uses \textsc{simple-fetch} as its $\fetch$ rule,
and its rule for $\retire$ only removes the instruction
from
the buffer. It does not have an
execute rule.
However, $\lfence$ instructions affect the
execution of all instructions in the reorder buffer that come
\emph{after} them.
In prior sections, execute rules have the highlighted condition $\forall j < i : \buf(j) \neq \lfence$.
This condition ensures that as long as a $\lfence$ instruction
remains in $\buf$, any instructions fetched after the
$\lfence$ cannot be executed.

We use $\lfence$ instructions to restrict
out-of-order execution in our semantics.
Notably, we can use it to prevent attacks of the forms shown in
\Cref{fig:v1-attack,fig:v1.1-attack,fig:v4-attack}.

\begin{figure}
  \small
  \centering

  \begin{tabular}{rlcrl}

    \multicolumn{2}{c}{Before executing $\idxnum 1$}                              && \multicolumn{2}{c}{After} \\
    $i~$         & $ \buf[i] $                                                    && $i~$         & $ \buf[i] $ \\
    \cline{1-2} \cline{4-5}
    $\idxnum 1 $ & $ \icond{\texttt{>}}{(4, r_a)}{\pcnum 2}{\pcnum 2}{\pcnum 5} $ && $\idxnum 1 $ & $ \ijmp{\pcnum 5} $ \\
    $\idxnum 2 $ & $ \ilfence $                                                   && \\
    $\idxnum 3 $ & $ \iload{r_b}{[\mathtt{40}, r_a]} $                            && \\
    $\idxnum 4 $ & $ \iload{r_c}{[\mathtt{44}, r_b]} $                            && \\

  \end{tabular}

  \caption{Example demonstrating fencing mitigation against Spectre v1 attacks. The $\lfence$ instruction
  prevents the $\load$ instructions from executing before the $\cond$.}
\label{fig:v1-fence}
\end{figure}

\mypara{Example}
The example in \Cref{fig:v1-fence} shows how placing a $\lfence$ instruction
just after the $\cond$ instruction prevents the Spectre v1 attack from
\Cref{fig:v1-attack}.
The $\lfence$ in this example prevents the $\load$ instructions at $\idxnum 2$
and $\idxnum 3$ from executing and forces the $\cond$
to be resolved first.
Evaluating the $\cond$ exposes the misprediction and causes the two $\load$s (as well as the $\lfence$)
to be rolled back.

\subsection{Indirect Jumps and Return Address Prediction}
\label{sec:indirect}
Finally, we briefly discuss two additional extensions to our semantics.
First, we extend our semantics with \emph{indirect jumps}. Rather than
specifying jump targets \emph{directly} as with the $\cond$ instruction
in \Cref{sec:cond}, indirect jumps compute the target from a list of
argument operands.
The indirect jump instruction has the form $\ajmpi{\vec \rv}$, where
$\vec \rv$ is the list of operands for calculating the jump target.
The transient form of $\jmpi$ is $\ijmpi{\vec \rv}{n_0}$, where $n_0$
is the predicted jump target.
To fetch a $\jmpi$ instruction, we use $\fetchg{\n'}$, where $\n'$ is the
speculated jump target. 
In all other respects, the rules for indirect jump instructions are similar
to the rules for conditional branches.

Second, we extend our semantics with \emph{call} and \emph{ret} instructions.
The call instruction has the form $\acall{n_f}{n_\mathit{ret}}$, where $n_f$ is
the callee program point and $n_\mathit{ret}$ is the program point to return
to. The return instruction is simply $\ret$.
Both the $\call$ and $\ret$ instructions have the simple transient forms $\icall$ and
$\iret$. However, when fetched, they are unpacked into multiple transient
instructions. Fetching a $\call$ produces the $\icall$ transient instruction as
well as transient instructions which will increment a stack pointer and store
the return program point to memory.
Fetching a $\ret$ produces a corresponding load, decrement, and
jump as well as the $\iret$ transient instruction.
Furthermore, the $\call$ and $\ret$ instructions respectively push and pop program points to
an additional configuration state representing the \emph{return stack buffer} (RSB).
The RSB is used to predict the new program point upon fetching a $\ret$.

In \Cref{sec:extended-semantics}, we present detailed rules for indirect jumps,
functions calls, and returns. We also show how both Spectre v2~\cite{spectre-v1} and
ret2spec~\cite{ret2spec} attacks can be expressed in our semantics, as well as
the \emph{retpoline} mitigation~\cite{turner} against Spectre v2 attacks.

\section{Detecting Violations}
\label{sec:tool}

We develop a tool \tool based on our semantics to check for SCT violations.
\tool first generates a set of schedules representing various \emph{worst-case}
attackers.
This set of schedules is far smaller than the set of all possible schedules for the
program, but is nonetheless sound:
if there is an SCT violation in any possible schedule, then there will be an
SCT violation in one of the worst-case schedules.
\tool then checks for secret leakage by symbolically executing the program
under each schedule.

\tool only exercises a subset of our semantics;
it does not detect SCT violations based on alias prediction, indirect jumps,
or return stack buffers (\Cref{sec:indirect,sec:alias}).
Doing so would require it to generate a prohibitively large number of
schedules.
Nevertheless, \tool still exposes attacks based on Spectre variants 1, 1.1, and
4.

We describe our schedule generation in \Cref{sec:alg},
and evaluate \tool on several crypto libraries in \Cref{sec:eval}.

\subsection{Schedule Generation}
\label{sec:alg}

Given a program, \tool generates a set of schedules representing various
worst-case attackers.
\tool's schedule generation is parametrized by a \emph{speculation bound},
which limits the size of the reorder buffer, and thus the depth of
speculation.

In general, \tool constructs worst-case schedules to maximize speculation.
These schedules eagerly fetch instructions until the reorder buffer is
full, \ie the size of the reorder buffer equals the speculation bound.
Once the reorder buffer is full, the schedules only retire instructions as
necessary to fetch new ones.

When conditional branches are to be fetched, \tool constructs schedules
containing both possible outcomes:
one where the branch is guessed true (\fetchg \ktrue) and one where the
branch is guessed false (\fetchg \kfalse).
For the mispredicted outcome, \tool's schedules execute the branch as late as
possible (\ie it is the oldest instruction in the reorder buffer and the
reorder buffer is full), which delays the rollback of mispredicted paths.

To account for the load-store forwarding hazards described in
\Cref{sec:forwarding}, \tool constructs schedules containing all possible
forwarding outcomes.
For every load instruction $l$ in the program, \tool finds all prior stores
$s_i$ within the speculation bound that would resolve to the same address.
Then, for each such store, \tool constructs a schedule that would cause that
store to forward its data to $l$.
That is, \tool constructs separate schedules [$\executeaddr{s_1};
\execute{l}$], [$\executeaddr{s_2}; \execute{l}$], and so on.
Additionally, \tool constructs a schedule where none of the prior stores
$s_i$ have resolved addresses, forcing the load instruction to read from
memory.

For all instructions other than conditional branches and memory
operations, \tool only constructs schedules where these instructions are
executed eagerly and in order.
Reordering of these instructions is uninteresting:
either the instructions naturally commute, or data dependencies prevent the
reordering (\ie the reordered schedule is invalid for the program).
This intuition matches with the property that any out-of-order
execution of a given program has the same final result regardless of its
schedule.

We formalize the soundness of \tool's schedule construction in more
detail in \Cref{sec:tool-soundness}.

\subsection{Implementation and Evaluation}
\label{sec:eval}

We implement \tool on top of the \texttt{angr} binary-analysis
tool~\cite{shoshitaishvili2016state}.
\tool uses \texttt{angr} to symbolically execute a given program
according to each of its worst-case schedules, flagging any resulting
secret leakage.

To sanity check \tool, we create and analyze a set of Spectre v1 and v1.1
test cases, and ensure we flag their SCT violations.
Our test cases are based off the well-known Kocher Spectre v1 examples~\cite{kocher-tests}.
Since many of the Kocher examples exhibit violations even during sequential
execution, we create a new set of Spectre v1 test cases which only exhibit
violations when executed speculatively.
We also develop a similar set of test cases for Spectre v1.1 data attacks.

\tool necessarily inherits the limitations of \texttt{angr}'s symbolic
execution.
For instance, \texttt{angr} concretizes addresses for
memory operations instead of keeping them symbolic.
Furthermore, exploring every speculative
branch and potential store-forward within a given speculation bound leads to
an explosion in state space.
In our tests, we were able to support speculation bounds of up to 20
instructions.
We were able to increase this bound to 250 instructions when we disabled
checking for store-forwarding hazards.
Though these bounds do not capture the speculation depth of some modern
processors, \tool still correctly finds SCT violations in all our test cases,
as well as SCT violations in real-world crypto code.
We consider the design and
implementation of a more scalable tool future work.

\subsubsection{Evaluation Procedure}
To evaluate \tool on real-world crypto implementations, we use
the same case studies as FaCT~\cite{cauligi:2019:fact}, a
domain-specific language and compiler for constant-time crypto code.
We use FaCT's case studies for two reasons:
these implementations have been verified to be (sequentially) constant-time,
and their inputs have already been annotated by the FaCT authors with secrecy labels.\footnote{\url{https://github.com/PLSysSec/fact-eval}}

We analyzed both the FaCT-generated binaries and the corresponding C binaries for
the case studies. For each binary, we ran \tool without forwarding hazard
detection---only looking for Spectre v1 and v1.1 violations---and with a
speculation bound of 250 instructions. If \tool did not flag any violations, we
re-enabled forwarding hazard detection---looking for Spectre v4 violations---and
ran \tool with a reduced bound of 20 instructions. The reduced bound
ensured that the analysis was tractable.

\begin{table}[t]
  \definecolor{checkgreen}{rgb}{0,.65,0}
  \newcommand\no{\checkmark}
  \newcommand\yes[1]{$\mathghost$} %
  \newcommand\fwd[1]{\yes{}$^f$} %
  \caption[]{A \yes{}\ indicates \tool found an SCT violation.
  A \fwd{}\ indicates the violation was found only with forwarding hazard detection.}
  \label{tab:eval}
  \begin{tabular}{l|c|c}
    \toprule
      \textbf{Case Study}          & \textbf{C}    & \textbf{FaCT} \\\midrule
      curve25519-donna             & \no           & \no           \\
      libsodium \texttt{secretbox} & \yes{lc-msg } & \no           \\
      OpenSSL ssl3 record validate & \yes{emctx-r} & \fwd{mispop}  \\
      OpenSSL MEE-CBC              & \yes{sha1fin} & \fwd{misret}  \\
    \bottomrule
  \end{tabular}
\end{table}

\subsubsection{Detected Violations}
\Cref{tab:eval} shows our results.
\tool did not flag any SCT violations in the curve25519-donna implementations; this is not surprising,
as the curve25519-donna library is a straightforward implementation of crypto primitives.
\tool did, however, find SCT violations (without forwarding hazard detection)
in both the libsodium and OpenSSL codebases.
Specifically, \tool found violations in the C implementations of these
libraries, in code ancillary to the core crypto routines.
This aligns with our intuition that crypto primitives will not themselves be
vulnerable to Spectre attacks, but higher-level code that interfaces with
these primitives may still leak secrets.
Such higher-level code is not present in the corresponding FaCT
implementations, and \tool did not find any violations in the FaCT code with
these settings.
However, with forwarding hazard detection, \tool was able to find
vulnerabilities even in the FaCT versions of the OpenSSL implementations.
We describe two of the violations \tool flagged next.

\mypara{C libsodium secretbox}
The libsodium codebase compiles with stack protection~\cite{gccstackprotect} turned on by default.
This means that, for certain functions (\eg functions with stack allocated
\ccode{char} buffers), the compiler inserts code in the function epilogue to
check if the stack was ``smashed''.
If so, the program displays an error message and aborts.
As part of printing the error message, the program calls a function 
\ccode{__libc_message}, which does \ccode{printf}-style string formatting. 

\begin{figure}[t]
  \inputminted[fontsize=\small,linenos,xleftmargin=1.5em]{c}{code/libc_message.c}
  \caption[]{Vulnerable snippet from \ccode{__libc__message()}.\footnotemark}
  \label{fig:libc-message}
\end{figure}
\footnotetext{Code snippet
  taken from \url{https://github.com/lattera/glibc/blob/895ef79e04a953cac1493863bcae29ad85657ee1/sysdeps/posix/libc_fatal.c}}
\Cref{fig:libc-message} shows a snippet from this function which traverses a linked list.
When running the C \ccode{secretbox} implementation speculatively, the
processor may misspeculate on the stack tampering check and jump into the
error handling code, eventually calling \ccode{__libc_message}.
Again due to misspeculation, the processor may incorrectly proceed through
the loop extra times, traversing non-existent links, eventually causing
secret data to be stored into \ccode{list} instead of a valid address (line
4).
On the following iteration of the loop, dereferencing \ccode{list} (line 2)
causes a secret-dependent memory access.

\mypara{FaCT OpenSSL MEE}
\begin{figure}[t]
  \inputminted[fontsize=\small,linenos,xleftmargin=1.5em,gobble=2]{fact_lexer.py:FaCTLexer -x}{code/openssl-mee-misreturn.fact}
  \caption[]{Vulnerable snippet from the FaCT OpenSSL MEE implementation.\footnotemark}
  \label{fig:mee-misreturn}
\end{figure}
\footnotetext{Code snippet
  taken from \url{https://github.com/PLSysSec/fact-eval/blob/888bc6c6898a06cef54170ea273de91868ea621e/openssl-mee/20170717_latest.fact}}
\newcommand{\thereg}{\texttt{\%r14}\xspace}
In \Cref{fig:mee-misreturn},
we show the code from the FaCT port of OpenSSL's authenticated encryption implementation.
The FaCT compiler transforms the branch at lines 5-7
into straight-line constant-time code, since the variable \fcode{pad} is considered \fcode{secret}.

Initially, register \thereg holds the length of the array \fcode{_out}.
The processor leaks this value due to the array access on line 3; this is
not a security violation, as the length is public.
On line 7, the value of \thereg is overwritten with \texttt{0} if
\fcode{pad > maxpad}, or \texttt{1} (the initial value of \fcode{ret})
otherwise.
Afterwards, the processor calls \fcode{_sha1_update}.

To return from \fcode{_sha1_update}, the processor must first load the return address
from memory.
When forwarding hazard detection is enabled, \tool allows this load to
speculatively receive data from stores \emph{older} than the most recent
store to that address (see \Cref{sec:forwarding}).
Specifically, it may receive the prior value that was stored at that
location: the return address for the call to \fcode{aesni_cbc_encrypt}.

After the speculative return, the processor executes line~3 a second time. 
This time, \thereg does not hold
the public value \fcode{len _out}; it instead holds the value of \fcode{ret},
which was derived from the secret condition \fcode{pad > maxpad}.
The processor thus accesses either \fcode{_out[0]} or \fcode{_out[-1]}, leaking
information about the secret value of \fcode{pad} via cache state.

\section{Related Work}
\label{sec:related}
Prior work on modeling speculative or out-of-order execution is concerned
with correctness rather than
security~\cite{lahiri2002modeling,alglave2009semantics}.
We instead focus on security and model side-channel leakage
explicitly.
Moreover, we abstract away the specifics of microarchitectural
features, considering them to be adversarially controlled.

Disselkoen et al.~\cite{disselkoen19code} explore speculation and
out-of-order effects through a relaxed memory model. Their semantics
sits at a higher level, and is orthogonal to our approach. They do not
define a semantic notion of security that prevents Spectre-like
attacks, and do not provide support for verification.

Mcilroy et al.~\cite{mcilroy19spectre} reason about micro-architectural attacks
using a multi-stage pipeline semantics (though they do not define a formal
security property).
Their semantics models branch predictor and cache state explicitly.
However, they do not model the effects of speculative barriers, nor other
microarchitecture features such as store-forwarding.
Thus, their semantics can only capture Spectre v1 attacks.

Both Guarnieri et al.~\cite{guarnieri18spectector} and Cheang et
al.~\cite{cheang19formal} define speculative semantics that are supported by tools.
Their semantics handle speculation through branch prediction---where the
predictor is left abstract---but do not capture more general out-of-order
execution nor other types of speculation.
These works also propose new semantic notions of security (different from SCT);
both essentially require that the speculative execution of a program not leak
more than its sequential execution.
If a program is sequentially constant-time, this additional security
property is equivalent to our notion of speculative constant-time.
Though our property is stronger, it is also simpler to verify: we can directly
check SCT without first checking if a program is sequentially constant-time.
And since we focus on cryptographic code, we directly require the stronger SCT
property.

Balliu et al.~\cite{balliu2019inspectre} define a semantics in a style similar
to ours. Their semantics captures various Spectre attacks, including an attack
similar to our alias prediction example (\Cref{fig:vnew-attack}), and a new
attack based on their memory ordering semantics, which our semantics cannot
capture.

Finally, several tools detect Spectre vulnerabilities, but do not
present semantics.
The oo7 static analysis tool~\cite{wang18oo7}, for example, uses taint tracking
to find Spectre attacks and automatically insert mitigations for several
variants.
Wu and Wang~\cite{wu19abstract}, on the other hand, perform cache analysis of
LLVM programs under speculative execution, capturing Spectre v1
attacks.

\section{Conclusion}
We introduced a semantics for reasoning about side-channels under adversarially
controlled out-of-order and speculative execution.  Our semantics capture
existing transient execution attacks---namely Spectre---but can be extended to
future hardware predictors and potential attacks.  We also defined a new notion
of constant-time code under speculation---speculative constant-time (SCT)---and
implemented a prototype tool to check if code is SCT.  Our prototype, \tool,
discovered new vulnerabilities in real-world crypto libraries.

There are several directions for future work. Our immediate plan is to use our
semantics to prove the effectiveness of existing countermeasures (\eg
retpolines) and to justify new countermeasures.

\begin{acks}

We thank the anonymous PLDI and PLDI AEC reviewers and our shepherd James
Bornholt for their suggestions and insightful comments.
We thank David Kaplan from AMD for his detailed analysis of our
proof-of-concept exploit that we incorrectly thought to be abusing an aliasing
predictor.
We also thank Natalie Popescu for her aid in editing and formatting this paper.
This work was supported in part by gifts from Cisco and Fastly,  %
by the NSF under Grant Number CCF-1918573,  %
by ONR Grant N000141512750,  %
and by the CONIX Research Center, one of six centers in JUMP,  %
a Semiconductor Research Corporation (SRC) program sponsored by DARPA.  %

\end{acks}

\bibliographystyle{ACM-Reference-Format}
\bibliography{local}

\appendix

\section{Extended semantics}
\label{sec:extended-semantics}

\subsection{Indirect jumps}
\label{sec:jmpi-semantics}

\mypara{Semantics}
The semantics for $\jmpi$ are given below:
\begin{mathpar}
  \infer[jmpi-fetch]
  { \mem(\n) = \ajmpi{\vec \rv} \\
    i = \freshi \\
    \buf' = buf[i \aug \ijmpi{\vec \rv}{n'}] }
  { \specstep{\fetchg {n'}} (\reg, \mem, \n', \buf') }

  \infer[jmpi-execute-correct]
  { \buf(i) = \ijmpi{\vec \rv}{n_0} \\
    \forall j < i : \buf(j) \neq \lfence \\
    \regmap{\vec \rv} = \vec {v_\l}  \\
    \l = \join \vec \l \\
    \addrf{\vec{v_\l}} = n_0 \\
    \buf' = \buf[i \aug \ijmp{n_0}] }
  { \specstep[\ojump {n_0}_\l]{\executei} (\reg, \mem, \n, \buf') }

  \infer[jmpi-execute-incorrect]
  { \buf(i) = \ijmpi{\vec \rv}{n_0} \\
    \forall j < i : \buf[j] \neq \lfence \\
    \regmap{\vec \rv} = \vec {v_\l}  \\
    \l = \join \vec \l \\
    \addrf{\vec{v_\l}} = n' \neq n_0 \\
    \buf' = \buf[j : j < i][~ i \aug \ijmp{n'}] }
  { \specstep[\omiss,\ojump n'_{\l}]{\executei} (\reg, \mem, \n', \buf') }
\end{mathpar}

When fetching a $\jmpi$ instruction, the schedule guesses the jump
target $n'$.  The rule records the operands and the guessed program point 
in a new buffer entry.
In a real processors, the jump target guess is supplied by an indirect
branch predictor; as branch predictors can be arbitrarily influenced
by an adversary~\cite{spectre-ng}, we model the guess as an attacker
directive.

In the execute stage, we calculate the actual jump target and compare
it to the guess. If the actual target and the guess match, we update
the entry in the reorder buffer to the resolved jump instruction
$\ijmp n_0$.
If actual target and the guess do not match, we roll back the
execution by removing all buffer entries larger or equal to~$i$,
update the buffer with the resolved jump to the correct address, and
set the next instruction.

Like conditional branch instructions, indirect
jumps leak the calculated jump target.

\begin{figure}
  \small
  \centering

  \begin{tabular}{llcrl}

    \multicolumn{2}{c}{Registers}                      && \multicolumn{2}{c}{Program} \\
    $~r $               & $ \reg(r) $                  && $n $        & $ \mem(n) $ \\
    \cline{1-2} \cline {4-5}
    $r_a$               & $ 1_\kpub $                  && $ \pcnum 1   $ & $ \aload{r_c}{\mathtt{[48,r_a]}}{\pcnum 2} $ \\
    $r_b$               & $ 8_\kpub $                  && $ \pcnum 2   $ & $ \alfence{\pcnum 3} $ \\
    \multicolumn{2}{c}{Memory}                         && $ \pcnum 3   $ & $ \ajmpi{[\mathtt{12},r_b]} $ \\
    $~a $               & $ \mem(a) $                  &&                & \dots \\
    \cline{1-2}
    $ \mathtt{44..47} $ & $ \textit{array B}_\kpub $   && $ \pcnum{16} $ & $ \alfence{\pcnum{17}} $ \\
    $ \mathtt{48..4B} $ & $ \textit{array Key}_\ksec $ && $ \pcnum{17} $ & $ \aload{r_d}{\mathtt{[44,r_c]}}{\pcnum{18}} $ \\
  \end{tabular}
  \vspace{0.5em}\\
  \begin{tabular}{lll}

    Directive             & Effect on $\buf$                                             & Leakage \\\hline
    $\fetch$              & $ \idxnum 1 \mapsto \iloadx{r_c}{\mathtt{48} + r_a} $        & \\
    $\fetch$              & $ \idxnum 2 \mapsto \ilfence $                               & \\
    $\execute{\idxnum 1}$ & $ \idxnum 1 \mapsto \ivalx{r_c}{\textit{Key}[1]_\ksec} $     & $ \oread \mathtt{49}_\kpub $ \\
    $\fetchg{\pcnum{17}}$ & $ \idxnum 3 \mapsto \ijmpi{[\mathtt{12}, r_b]}{\pcnum{17}} $ & \\
    $\fetch$              & $ \idxnum 4 \mapsto \iloadx{r_d}{[\mathtt{44},r_c]} $        & \\
    $\retire$             & $ \idxnum 1 \notin \buf $                                    & \\
    $\retire$             & $ \idxnum 2 \notin \buf $                                    & \\
    $\execute{\idxnum 4}$ & $ \idxnum 4 \mapsto \ivalx{r_d}{X} $                         & $ \oread a_\ksec $ \\

  \end{tabular}

  where $a = \textit{Key}[1]_\ksec + \mathtt{40}$
  \caption{Example demonstrating Spectre v2 attack from a mistrained indirect branch predictor.
  Speculation barriers are not a useful defense against this style of attack.}
  \label{fig:v2-attack}
\end{figure}

\mypara{Examples}
The example in \Cref{fig:v2-attack} shows how a mistrained indirect branch predictor
can lead to disclosure vulnerabilities. After loading a secret value into $r_c$
at program point $\pcnum 1$, the program makes an indirect jump. An adversary
can mistrain the predictor to send execution to $\pcnum{17}$ instead of the intended
branch target, where the secret value in $r_c$ is immediately leaked.
Because indirect jumps can have arbitrary branch target locations, $\lfence$ instructions
do not prevent these kinds of attacks; an adversary can simply retarget the indirect
jump to the instruction after the $\lfence$, as is seen in this example.

\subsection{Return address prediction}
\label{sec:callret}
Next, we discuss how our semantics models function calls.

\mypara{Instructions}
We introduce the following two physical instructions:
$\acall{n_f}{n_\mathit{ret}}$, where $n_f$ is the target program point of the call and
$n_\mathit{ret} $ is the return program point; and the return instruction $\ret$.
Their transient forms are simply $\icall$ and $\iret$.

\mypara{Call stack}
To track control flow in the presence of function calls, our
semantics explicitly maintains a call stack in memory.
For this, we use a dedicated register $\rsp$ which points to the top
of the call stack, and which we call the \emph{stack pointer
register}.

On fetching a $\call$ instruction, we update $\rsp$ to point to the address
of the next element of the stack using an abstract operation $\mathop{succ}$.
It then saves the return address to the newly computed address.
On returning from a function call, our semantics transfers control to
the return address at $\rsp$, and then updates $\rsp$ to point to the
address of the previous element using a function~$\mathop{pred}$.
This step makes use of a temporary register
$\rret$.

Using abstract operations $\mathop{succ}$ and $\mathop{pred}$ rather
than committing to a concrete implementation allows
our semantics to capture different stack designs.
For example, on a 32-bit x86 processor with a downward-growing
stack, $\op(\mathop{succ}, \rsp)$ would be implemented as $\rsp - 4$,
while $\op(\mathop{pred}, \rsp)$ would be implemented as $\rsp + 4$; on
an upward growing system, the reverse would be true.

Note that the stack register $\rsp$ is not protected from illegal
access and can be updated freely.

\begin{figure}
  \small
  \centering
   Program 
    \begin{tabular}{llll}      
      $n $ &     $\pcnum 1$ &     $\pcnum 2 $ &    $\pcnum 3 $  \\
      $ \mem(n)$ &  $\acall{\pcnum{3}}{\pcnum{2}}$ & $\aret$ & $
                                                               \aret$ \\
  \end{tabular}

  \vspace{0.5em}
  \begin{tabular}{llll}

    Directive                             & $\n$                                      & $\buf$                                                   & $\rets$ \\\hline
    $ \fetch $                            & $\pcnum 1 \to \pcnum 3$                   & $ \idxnum 1 \mapsto \icall $                             & $\idxnum 1 \mapsto \push\ \pcnum 2$\\
                                          &                                           & $ \idxnum 2 \mapsto \iopx{\mathop{succ}}{\rsp}{\rsp} $   & \\
                                          &                                           & $ \idxnum 3 \mapsto \istore{\mathtt{2}}{[\rsp]} $        & \\
    $ \fetch $                            & $\pcnum 3 \to \pcnum 2$                   & $ \idxnum 4 \mapsto \iret $                              & $\idxnum 4 \mapsto \pop$\\
                                          &                                           & $ \idxnum 5 \mapsto \iloadx{\rret}{[\rsp]} $             & \\
                                          &                                           & $ \idxnum 6 \mapsto \iopx{\mathop{pred}}{\rsp}{\rsp} $   & \\
                                          &                                           & $ \idxnum 7 \mapsto \ijmpi{[\rret]}{\pcnum 2} $          & \\
    $\fetchg{\color{darkred}{\pcnum{n}}}$ & $\pcnum 2 \to \color{darkred}{\pcnum{n}}$ & $ \idxnum 8 \mapsto \iret $                              & $\idxnum 8 \mapsto \pop$\\
                                          &                                           & $ \idxnum 9 \mapsto \iloadx{\rret}{[\rsp]} $             & \\
                                          &                                           & $ \idxnum{10} \mapsto \iopx{\mathop{pred}}{\rsp}{\rsp} $ & \\
                                          &                                           & $ \idxnum{11} \mapsto \ijmpi{[\rret]}{{\color{darkred}{\pcnum n}}} $        & \\
  \end{tabular}

  \caption{Example demonstrating a ret2spec-style attack~\cite{ret2spec}.
  The attacker is able to send (speculative) execution to an arbitrary program point, shown in red.}
\label{fig:rsb-attack}
\end{figure}

\mypara{Return stack buffer}
For performance, modern processors speculatively predict return addresses.
To model this, we extend configurations with a new piece of state
called the \emph{return stack buffer} (RSB), written as $\rets$.
The return stack buffer contains the expected return address at any
execution point.
Its implementation is simple: for a $\call$ instruction, the semantics
pushes the return address to the RSB, while for a $\ret$ instruction, the
semantics pops the address at the top of the RSB.
Similar to the reorder buffer, we address the RSB through indices and
roll it back on misspeculation or memory hazards.

We model return prediction directly through the return stack
buffer rather than relying on attacker directives, as most processors
follow this simple strategy, and the predictions therefore cannot be
influenced by an attacker.

We now present the step rules for our semantics.

\mypara{Calling}
\renewcommand\specstep[2][]{(\reg, \mem, \n, \buf, \rets) \xhookrightarrow[#2]{#1} }
\begin{mathpar}
  \infer[call-direct-fetch]
  { \mem(\n) = \acall{n_f}{n_\mathit{ret}} \\
    i = \freshi \\
    \buf_1= \buf[ i \aug \icall][i+1 \aug \iop{\mathop{succ}}{\rsp}{\rsp}] \\
   \buf'= \buf_1 [i + 2 \aug \istore{n_\mathit{ret}}{[\rsp]}] \\ %
     \rets' = \rets[i \aug \push\ n_\mathit{ret}] \and \n'= \n_f }
  { \specstep{\fetch} (\reg, \mem, \n', \buf', \rets') }
\end{mathpar}\begin{mathpar}
  \infer[call-retire]
  { \retirable{i} \\
    \buf(i) = \icall \\
    \buf(i+1) = \ival{\rsp}{\vsp} \\
    \buf(i+2) = \istore{n_\mathit{ret} }{\a_{\l_a} } \\
    \reg' = \reg[\rsp \aug \vsp] \\
    \mem' = \mem[\a \aug n_\mathit{ret} ] \\
    \buf' = \buf[j : j > i + 2] }
 { \specstep[\owrite \a_{\l_a} ]{\retire} (\reg', \mem', \n, \buf', \rets) }
 \end{mathpar}

On fetching a $\call$ instruction, we add three transient instructions to the reorder buffer
to model pushing the return address to the in-memory stack.
The first transient instruction, $\icall$, simply serves as an indication that the following
two instructions come from fetching a $\call$ instruction.
The remaining two instructions advance $\rsp$ to point to a new stack entry, then store the return
address $n_\mathit{ret}$ in the new entry. Neither of these transient instructions are fully
resolved---they will need to be executed in later steps.
We next add a new entry to the RSB, signifying a push of the return address $n_\mathit{ret}$ to the RSB.
Finally, we set our program point to the target of the call $n_f$.

When retiring a $\call$, all three instructions generated during the fetch are retired together.
The register file is updated with the new value of $\rsp$, and the return address is written to physical memory,
producing the corresponding leakage.

The semantics for direct calls can be extended to cover indirect calls in a straightforward manner by
imitating the semantics for indirect jumps. We omit them for brevity.

\mypara{Evaluating the RSB}
We define a function $\sttop(\rets)$ that retrieves the value at the
top of the RSB stack.
For this, we let $\ll \rets\rr$ be a function that transforms the RSB
stack~$\rets$ into a stack in the form of a partial map ($\mathit{st}
: \mathcal{N} \rightharpoonup \V$) from the natural numbers to program
points, as follows: the function~$\ll \cdot \rr$ applies the commands
for each value in the domain of~$\rets$, in the order of the indices. For
a $\push\ n$ it adds~$n$ to the lowest empty index of $\mathit{st}$.
For $\pop$, it and removes the value with the highest index
in~$\mathit{st}$, if it exists. We then define $\sttop(\rets)$ as
$\mathit{st}(\textsf{MAX}(\mathit{st}))$, where $\mathit{st}=\ll
\rets\rr$, and $\bot$, if the domain of~$\mathit{st}$ is empty.
For example, if $\rets$ is given as $\emptyset [1 \aug\ \push \; \pcnum 4][2
  \aug\ \push \; \pcnum  5] [3 \aug pop]$, then
$\ll \rets \rr= \emptyset [1 \aug  \pcnum 4]$, and
$\sttop(\rets)=\pcnum 4$.

\mypara{Returning}
\begin{mathpar}
  \renewcommand\specstep[2][]{(\reg, \mem, \n, \buf, \rets) \xhookrightarrow[#2]{#1} }
  \infer[ret-fetch-rsb]
  { \mem(\n) = \aret \\
    \sttop(\rets) = n' \\
    i = \freshi \\
    \buf_1 = \buf[i \aug \iret] \\
    \buf_2 = \buf_1[i + 1\aug \iload{\rret}{[\rsp]}] \\
    \buf_3 = \buf_2[i + 2 \aug \iop{\mathop{pred}}{\rsp}{\rsp}] \\
    \buf_4 = \buf_3[i + 3 \aug \ijmpi{[\rret]}{n'}] \\
    \rets' = \rets[i \aug \pop] }
  { \specstep{\fetch} (\reg, \mem, \n', \buf_4, \rets') }
\end{mathpar}\begin{mathpar}
  \infer[ret-fetch-rsb-empty]
  { \mem(\n) = \aret \\
    \sttop(\rets) = \bot \\
    i = \freshi \\
    \buf_1 = \buf[i \aug \iret] \\
    \buf_2 = \buf_1[i + 1 \aug \iload{\rret}{[\rsp]}] \\
    \buf_3 = \buf_2[i + 2 \aug \iop{\mathop{pred}}{\rsp}{\rsp}] \\
    \buf_4 = \buf_3[i + 3 \aug \ijmpi{[\rret]}{\n'}] \\
    \rets' = \rets[i \aug \pop] }
  { \specstep{\fetchg{\n'}} (\reg, \mem, \n', \buf_4, \rets') }
\end{mathpar}\begin{mathpar}
  \infer[ret-retire]
  { \retirable{i} \\
    \buf(i) = \iret \\
    \buf(i+1) = \ival{\rret}{{v_1}_{\l_1}} \\
    \buf(i+2) = \ival{\rsp}{{v_2}_{\l_2}} \\
    \buf(i+3) = \ijmp {n'} \\
    \reg' = \reg[\rsp \aug {v_2}_{\l_2}] \\
    \buf' = \buf[j : j > i + 3] }
  { \specstep{\retire} (\reg', \mem, \n, \buf', \rets) }
\end{mathpar}

On a fetch of $\ret$, the next program point is set  to the predicted
return address, \ie the top value of the RSB, $\sttop(\rets)$.
Just as with $\call$, we add the transient $\iret$ instruction to the reorder buffer, followed by
the following (unresolved) instructions:
we load the value at address $\rsp$ into a temporary register $\rret$,
we ``pop'' $\rsp$ to point back to the previous stack entry, and then add an indirect
jump to the program point given by $\rret$.
Finally, we add a $\pop$ entry to the RSB. As with $\call$
instructions, the set of instructions generated by a $\ret$ fetch are
retired all at once.

When the RSB is empty, the attacker can supply a speculative return
address via the directive $\fetchg{n'}$. This is consistent with the
behavior of existing processors. %
In practice, there are several variants on what processors actually do when the RSB is empty~\cite{ret2spec}:
\begin{itemize}
  \item AMD processors refuse to speculate. This can be modeled by defining
    $\sttop(\rets)$ to be a failing predicate if it would result in $\bot$.
  \item Intel Skylake/Broadwell processors fall back to using their branch target predictor. This can be modeled by
    allowing arbitrary $n'$ for the $\fetchg{n'}$ directive for the \textsc{ret-fetch-rsb-empty} rule.
  \item ``Most'' Intel processors treat the RSB as a circular buffer, taking whichever value is produced when the
    RSB over- or underflows. This can be modeled by having $\sttop(\rets)$ always produce an according value, and
    never producing $\bot$.
\end{itemize}

\begin{figure}
  \small
  \centering

  \begin{tabular}{llcrl}

    \multicolumn{2}{c}{Registers}                      && \multicolumn{2}{c}{Program} \\
    $~r $               & $ \reg(r) $                  && $n $        & $ \mem(n) $ \\
    \cline{1-2} \cline {4-5}
    $r_b$               & $ 8_\kpub $                  && $ \pcnum 3   $ & $ \acall{\pcnum 5}{\pcnum 4} $ \\
    $\rsp$              & $ \mathtt{7C}_\kpub $        && $ \pcnum 4   $ & $ \alfence{\pcnum 4} $ \\
                                                      &&& $ \pcnum 5   $ & $ \aopx{\addr}{r_d}{[\mathtt{12}, r_b]}{\pcnum 6} $ \\
                                                      &&& $ \pcnum 6   $ & $ \astore{r_d}{[\rsp]}{\pcnum 7} $ \\
                                                      &&& $ \pcnum 7   $ & $ \aret $ \\
                                                      &&&                & \\
  \end{tabular}
  \begin{tabular}{lll}

    \multicolumn{3}{c}{Effect of successive $\fetch$ directives} \\
    $\n$                        & $\buf$                                                 & $\rets$ \\\hline
    $\pcnum 3 \to \pcnum 5$     & $ \idxnum 3 \mapsto \icall $       & $\idxnum 3 \mapsto \push\ \pcnum 4$\\
                                & $ \idxnum 4 \mapsto \iopx{\mathop{succ}}{\rsp}{\rsp} $ & \\
                                & $ \idxnum 5 \mapsto \istore{\mathtt{4}}{[\rsp]} $      & \\
    $\pcnum 5 \to \pcnum 6$     & $ \idxnum 6 \mapsto \iopx{\addr}{r_d}{[12,r_b]} $      & \\
    $\pcnum 6 \to \pcnum 7$     & $ \idxnum 7 \mapsto \istore{r_d}{[\rsp]} $             & \\
    $\pcnum 7 \to \pcnum 4$     & $ \idxnum 8 \mapsto \iret $                            & $\idxnum 8 \mapsto \pop$\\
                                & $ \idxnum 9 \mapsto \iloadx{\rret}{[\rsp]} $           & \\
                                & $ \idxnum{10} \mapsto \iopx{\mathop{pred}}{\rsp}{\rsp} $ & \\
                                & $ \idxnum{11} \mapsto \ijmpi{[\rret]}{\pcnum 4} $        & \\
    $\pcnum 4 \to \pcnum 4$     & $ \idxnum{12} \mapsto \ilfence $                       & \\

  \end{tabular}
  \vspace{0.5em}\\
  \begin{tabular}{lll}

    Directive                              & Effect on $\buf$                                & Leakage \\\hline
    $\execute{\idxnum 4} $                 & $ \idxnum 4 \mapsto \ivalx{\rsp}{\mathtt{7B}} $ & \\
    $\execute{\idxnum 6} $                 & $ \idxnum 6 \mapsto \ivalx{r_d}{20} $           & \\
    $\executevalue{\idxnum 7} $            & $ \idxnum 7 \mapsto \istore{20}{[\rsp]} $       & \\
    $\executeaddr{\idxnum 7} $             & $ \idxnum 7 \mapsto \istore{20}{\mathtt{7B}} $  & $ \ofwd{\mathtt{7B}} $ \\
    $\execute{\idxnum 9} $                 & $ \idxnum 9 \mapsto \ivalx{\rret}{20} $         & $ \ofwd{\mathtt{7B}} $ \\
    \multirow{2}{*}{$\execute{\idxnum{11}}$} & $ \idxnum{12} \notin \buf $                     & $ \omiss, $ \\
                                           & $ \idxnum{11} \mapsto \ijmp{\pcnum{20}} $         & $ \ojump{\pcnum{20}} $ \\

  \end{tabular}

  \caption{Example demonstrating ``retpoline'' mitigation against Spectre v2 attack.
  The program is able to jump to program point $12 + r_b = \pcnum{20}$ without the schedule influencing prediction.}
  \label{fig:retpoline}
\end{figure}

\mypara{Examples}
We present an example of an RSB underflow attack in
\Cref{fig:rsb-attack}.
After fetching a $\call$ and paired $\ret$ instruction, the RSB will
be ``empty''.
When one more (unmatched) $\ret$ instruction is fetched, since
$\sttop(\rets) = \bot$, the program point $\n$ is no longer set by the
RSB, and is instead set by the (attacker-controlled) schedule.

\mypara{Retpoline mitigation}
A mitigation for Spectre v2 attacks is to replace indirect jumps with \emph{retpolines}~\cite{turner}.
\Cref{fig:retpoline} shows a retpoline construction that would replace the indirect jump in \Cref{fig:v2-attack}.
The $\call$ sends execution to program point $\pcnum 5$, while adding $\pcnum 4$ to the RSB.
The next two instructions at $\pcnum 5$ and $\pcnum 6$ calculate the same target as the indirect jump in \Cref{fig:v2-attack}
and overwrite the return address in memory with this jump target.
When executed speculatively, the $\ret$ at $\pcnum 7$ will pop the top value off the RSB, $\pcnum 4$, and jump there,
landing on a $\lfence$ instruction that loops back on itself. Thus speculative execution cannot proceed beyond this point.
When the transient instructions in the $\ret$ sequence finally execute, the indirect jump target $\pcnum{20}$
is loaded from memory, causing a roll back. However, execution is then directed to the proper jump target.
Notably, at no point is an attacker able to hijack the jump target via misprediction.

\section{Full proofs}
\label{sec:full-proofs}

\subsection{Consistency}

\begin{lemma}[Determinism]
If $C \step{d}{o'} C'$ and $C \step{d}{o''} C''$ then $C' = C''$ and $o' = o''$.
\end{lemma}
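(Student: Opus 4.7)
The plan is a routine case analysis on the directive $d$, with a sub-analysis on the form of the instruction being acted upon. For each directive I will argue that (a) at most one inference rule has premises that are satisfied by $C$, and (b) once that rule is selected, its conclusion $(C',o)$ is a deterministic function of $C$ and $d$. Since every right-hand side in the rules is built from lookups in $\reg$, $\mem$, $\buf$, and $\rets$, from the arithmetic helpers ($\addr$, $\opcode$, $\mathop{succ}$, $\mathop{pred}$, $\MAX$, $\MIN$, $\regmapname$), and from the data carried in $d$, determinism of the step reduces to determinism of rule selection.

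First I would group the directives by stage. For the fetch directives ($\fetch$, $\fetchg b$, $\fetchg{n'}$), the applicable rule is fixed by the syntactic form of $\mem(\n)$, and the fresh index is $\MAX(\buf)+1$; each transient instruction deposited into $\buf'$ is a purely syntactic function of $\mem(\n)$ and the datum in the directive. For $\retire$, the instruction acted upon is $\buf(\MIN(\buf))$, whose form selects exactly one of \textsc{value-retire}, \textsc{store-retire}, \textsc{jump-retire}, \textsc{fence-retire}, \textsc{call-retire}, or \textsc{ret-retire}; each updates $\reg$, $\mem$, $\buf$, $\rets$ by a deterministic functional substitution. For the execute directives, the rule first splits on the directive suffix ($\textsf{value}$, $\textsf{addr}$, $\textsf{fwd }j$, or bare), and then on the form of $\buf(i)$.

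The only non-trivial step is checking mutual exclusivity of the several execute rules that share the same $(\buf(i),d)$ shape. For conditional branches, \textsc{cond-execute-correct} and \textsc{cond-execute-incorrect} are separated by the boolean predicate $n^\ktrue = n_0$ vs.\ $n^\ktrue \ne n_0$. For loads the split is over: (i) whether $\buf(i)$ is unresolved, partially resolved with $j\in\dom(\buf)$, or partially resolved with $j\notin\dom(\buf)$; (ii) whether a prior store in $\buf$ has resolved address equal to the computed $\a$; and (iii) for the memory cases, whether $\mem(\a)$ equals the forwarded value. These are complementary boolean conditions over $C$, so at most one of \textsc{load-execute-nodep}, \textsc{load-execute-forward}, \textsc{load-execute-addr-ok}, \textsc{load-execute-addr-hazard}, \textsc{load-execute-addr-mem-match}, \textsc{load-execute-addr-mem-hazard} has its premises met. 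For stores, \textsc{store-execute-addr-ok} and \textsc{store-execute-addr-hazard} are separated by the existence of a future conflicting annotated value in $\buf$ (the $min(k) > i$ witness).

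Once the rule is pinned down, the output components are uniquely determined: the updated $\buf'$ is specified by an explicit functional update, the new program point is either unchanged or copied from a field of $\buf(i)$ or from $d$, and the observation is a fixed syntactic template instantiated with the computed address/target and its label $\l_a = \bigsqcup \vec\l$, which is itself a function of $\reg$, $\buf$, and $d$ via $\regmapname$ and $\addr$. I would structure the write-up as a single case split with one short paragraph per directive, deferring the load sub-cases to an itemized sub-proof. I do not expect any genuine obstacle; the only care needed is to verify that the listed side conditions for the load and store rules really do partition the state space, which is a mechanical check against the rule statements in \Cref{sec:forwarding} and \Cref{sec:alias}.
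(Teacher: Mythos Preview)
Your proposal is correct and follows the same approach as the paper: the paper's proof is the single sentence ``The tuple $(C, d)$ fully determines which rule of the semantics can be executed,'' and your case analysis is simply a detailed expansion of that claim. Your write-up is considerably more thorough than what the authors provide, but the underlying idea---that the directive together with the current configuration picks out at most one rule, whose conclusion is then a function of $C$ and $d$---is identical.
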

\begin{proof}
The tuple $(C, d)$ fully determines which rule of the semantics can be executed.
\end{proof}

\begin{definition}[Initial/terminal configuration]
  A configuration $C$ is an \emph{initial} (or \emph{terminal}) configuration if $|C.\buf| = 0$.
\end{definition}

\begin{definition}[Sequential schedule]
  Given a configuration $C$, we say a schedule $D$ is \emph{sequential}
  if every instruction that is fetched is executed and retired before further
  instructions are fetched.
\end{definition}

\begin{definition}[Sequential execution]
  $C \bigstep{D}{O} C'$ is a sequential execution if $C$ is an initial
  configuration, $D$ is a sequential schedule for $C$, and $C'$ is a terminal configuration.
\end{definition}
We write $C \bigstep{\seq}{O} C'$ if we execute sequentially.

\begin{lemma}[Sequential equivalence]
\label{lem:seq-correctness}
If $C \bigstep{D_1}{O_1} C_1$ is sequential and $C \bigstep{D_2}{O_2} C_2$ is sequential, then $C_1 = C_2$.
\end{lemma}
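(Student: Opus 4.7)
My plan is to prove this by induction on the number of retired instructions (equivalently, on the length of a sequential execution starting from $C$). The key observation is that in a sequential schedule, each instruction is fully processed (fetched, executed, and retired---together with any auxiliary transient instructions it generates) before the next fetch occurs. So the execution decomposes into a sequence of disjoint \emph{cycles}, and within each cycle the reorder buffer starts empty, grows to contain only instructions generated from a single physical instruction, and then empties again. The only sources of non-determinism in a directive-driven cycle are: the choice of $b \in \{\ktrue, \kfalse\}$ in $\fetchg{b}$ for a conditional branch, the predicted target $n_0$ for an indirect jump (including the returns that fall back to $\fetchg{n'}$), and the order in which a store's value and address are resolved. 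I will show that none of these choices affect the configuration that results at the end of a cycle.

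The main lemma (call it \emph{cycle confluence}) is then: if a cycle begins in configuration $C$ and ends in some $C'$ under sequential directives, then $C'$ is uniquely determined by $C$. I will prove this by cases on $\mem(C.\n)$. For $\op$, $\load$, $\lfence$, and $\store$ instructions there are either no directive choices at all, or the only choice is scheduling order between $\executevalue$ and $\executeaddr$ for a $\store$, which commute because both merely specialize independent fields of the same transient store. For a $\load$, since the buffer contains no prior stores (it was empty at the start of the cycle), only the \textsc{load-execute-nodep} rule can fire, so the value loaded is $\mem(\a)$. For $\cond$ and $\jmpi$, the fetch rule installs the guess $b$ (or $n_0$) into the transient instruction and advances $\n$ to the guessed target; the immediately following execute step then either confirms the guess (leaving $\n$ unchanged) or rolls back and re-sets $\n$ to the correct target computed from the resolved operands. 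In either case the post-execute configuration has the same $\ijmp$ instruction in $\buf$ and the correct $\n$, so the retire step produces identical results. For $\call$ and $\ret$ the same reasoning applies cycle-wide: the three (resp. four) generated transient instructions are executed and retired deterministically, with any misprediction of the $\ret$ target corrected by the embedded $\jmpi$.

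With cycle confluence in hand, the induction is straightforward. Base case: if either execution has length zero, then $C$ is already terminal and both $C_1$ and $C_2$ equal $C$. Inductive step: both $D_1$ and $D_2$ begin with a cycle processing $\mem(C.\n)$ (since $C$ is the common starting configuration and fetch is forced to look at $\mem(\n)$ in sequential mode). By cycle confluence, after one cycle both executions reach the same configuration $C''$. Since the tails of $D_1$ and $D_2$ are themselves sequential schedules from the initial configuration $C''$ (reorder buffer is again empty), the inductive hypothesis yields $C_1 = C_2$.

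The main obstacle will be dispatching the $\cond$, $\jmpi$, and $\ret$ cases cleanly: one must check that the speculative fetch followed by the corresponding execute rule (whether \textsc{-correct} or \textsc{-incorrect}) leaves the program point, register file, memory, and reorder-buffer contents independent of the guess. A secondary nuisance is $\call$ and $\ret$, since a ``cycle'' here spans multiple transient instructions with an internal indirect jump; I will handle this by treating the whole group as an atomic unit in the induction, using the fact that the RSB is updated by fetch alone and is consulted only once per cycle. I deliberately do not need to reason about observation traces $O_1$, $O_2$, since the statement only concerns the final configurations.
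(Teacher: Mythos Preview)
Your proposal is correct and takes essentially the same approach as the paper: induction on $N$ with a per-instruction case analysis showing that the one-instruction ``cycle'' (fetch/execute/retire) is confluent regardless of the directive choices (branch guess, store value/address ordering, etc.). The only cosmetic differences are that you peel off the \emph{first} cycle and name the case analysis as a separate ``cycle confluence'' lemma, whereas the paper peels off the \emph{last} cycle and inlines the case analysis in the inductive step; the substance of the argument is identical.
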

\begin{proof}
  Suppose $N = 0$. Then neither $D_1$ nor $D_2$ may contain any $\retire$ directives.
  Since we assume that both $C_1.\buf$ and $C_2.\buf$ have size 0,
  neither $D_1$ nor $D_2$ may contain any $\fetch$ directives.
  Therefore, both $D_1$ and $D_2$ are empty; both $C_1$ and $C_2$ are equal to $C$.

  We proceed by induction on $N$.

  Let $D_1'$ be a sequential prefix of $D_1$ up to the $N-1$th $\retire$,
  and let $D_1''$ be the remainder of $D_1$.
  That is, $\#\{d \in D_1' \mathop{|} d = \retire\} = N - 1$
  and $D_1' \| D_1'' = D_1$.
  Let $D_2'$ and $D_2''$ be similarly defined.

  By our induction hypothesis, we know $C \bigstep[N-1]{D_1'}{O_1'} C'$
  and $C \bigstep[N-1]{D_2'}{O_2'} C'$ for some $C'$.
  Since $D_1'$ (resp.~$D_2'$) is sequential and $|C'.\buf| = 0$, the first directive in $D_1''$ (resp.~$D_2''$)
  must be a $\fetch$ directive.
  Furthermore, $C' \bigstep[1]{D_1''}{O_1''} C_1$ and $C' \bigstep[1]{D_2''}{O_2''} C_2$.

  We can now proceed by cases on $C'.\mem[C'.\n]$, the final instruction to be fetched.

  \begin{itemize}
    \item For $\op$, the only valid sequence of directives is ($\fetch$, $\execute i$, $\retire$) where $i$ is
  the sole valid index in the buffer. Similarly for $\lfence$, with the sequence $\{\fetch, \retire\}$.

\item For $\load$, alias prediction is not possible, as no prior stores exist in the buffer.
  Therefore, just as with $\op$, the only valid sequence of directives is ($\fetch$, $\execute i$, $\retire$).

\item For $\store$, the only possible difference between $D_1''$ and $D_2''$ is the ordering
  of the $\execute{i:\,\textsf{value}}$ and $\execute{i:\,\textsf{addr}}$ directives. However, both
  orderings will result in the same configuration since they independently resolve the components of the $\store$.

\item For $\cond$, $D_1''$ and $D_2''$ may have different guesses for their initial fetch directives.
  However, both \textsc{cond-execute-correct} and \textsc{cond-execute-incorrect} will result
  in the same configuration regardless of the initial guess,
  as the $\cond$ is the only instruction in the buffer.
  Similarly for $\jmpi$.

\item For $\call$ and $\ret$, the ordering of execution of the resulting transient
  instructions does not affect the final configuration.
  \end{itemize}

  Thus for all cases we have $C_1 = C_2$.
\end{proof}

To make our discussion easier, we will say that a directive $d$ \emph{applies to}
a buffer index $i$ if when executing a step $C \step{d}{o} C'$:
\begin{itemize}
  \item $d$ is a $\fetch$ directive, and would fetch an instruction into index $i$ in $\buf$.
  \item $d$ is an $\executex$ directive, and would execute the instruction at index $i$ in $\buf$.
  \item $d$ is a $\retire$ directive, and would retire the instruction at index $i$ in $\buf$.
\end{itemize}

We would like to reason about schedules that do not contain \emph{misspeculated steps}, \ie
directives that are superfluous due to their effects getting wiped away by rollbacks.
\begin{definition}[Misspeculated steps]\label{def:misspeculatedsteps}
  Given an execution $C \bigstep{D}{O} C'$, we say that $D$ contains
  \emph{misspeculated steps} if there exists $d \in D$ such that $D' = D \setminus d$
  and $C \bigstep{D'}{O'} C'' = C'$.
\end{definition}
Given an execution $C \bigstep{D}{O} C'$ that may contain rollbacks, we
can create an alternate schedule $D^*$ without any rollbacks by removing all misspeculated steps.
Note that sequential schedules have no misspeculated steps\footnote{Sequential schedules may still misspeculate on conditional branches but  the rollback does not imply removal of any reorder buffer instructions as defined in Definition ~\ref{def:misspeculatedsteps}. } as defined in Definition~\ref{def:misspeculatedsteps}.

\begin{theorem}[Equivalence to sequential execution]
\label{thm:seq-correctness-full}
Let  $C$  be an initial configuration and $D$ a well-formed schedule for $C$.
If $C \bigstep{D}{O_1} C_1$, then $C \bigstep{\seq}{O_2} C_2$ and $C_1 \approx C_2$.
Furthermore, if $C_1$ is terminal then $C_1 = C_2$.
\end{theorem}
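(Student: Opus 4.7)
The plan is to prove the theorem by induction on $N$, the number of retired instructions in $D$, using the structural observation that every retire directive acts on the minimum-indexed instruction in the reorder buffer (see \textsc{jump-retire}, \textsc{value-retire}, \textsc{store-retire}, and \textsc{fence-retire}). Consequently, retirements always occur in buffer order, and the sequence of non-speculative effects of any well-formed speculative execution forms a prefix of those produced by the corresponding sequential execution.

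The crux is a per-step invariant: whenever an instruction retires, its effect on $(\reg, \mem)$ coincides with the effect that sequential execution would have at that program point. I would establish this by case analysis on the retire rules, showing that mispredicted or hazard-tainted values cannot survive until retirement. The cascading rollbacks in \textsc{cond-execute-incorrect}, \textsc{store-execute-addr-hazard}, \textsc{load-execute-addr-hazard}, and \textsc{load-execute-addr-mem-hazard} discard every transient instruction at index $\geq i$ (or $\geq k$) as soon as misspeculation is detected, so any value inconsistent with the sequential state is flushed from the buffer before it can reach $\textsf{MIN}(\buf)$. Combined with the $\lfence$ side-condition that gates all execute rules, this ensures that the register resolve function $(\buf +_i \reg)$ applied at retirement returns the sequentially-correct operand values.

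Given the invariant, the induction is routine: after the $N$ retires in $D$, the register file and memory in $C_1$ match those of a sequential execution of $N$ retires from $C$. By Lemma~\ref{lem:seq-correctness} this sequential execution is unique up to its final configuration $C_2$, so $C_1 \approx C_2$ follows from agreement on $\reg$ and $\mem$. When $C_1$ is terminal, $C_1.\buf = \emptyset$ and the corresponding sequential configuration is likewise terminal, so the two agree on every component and $C_1 = C_2$. The main obstacle is discharging the per-step invariant rigorously in the presence of store-to-load forwarding and alias prediction, since a retired load's value may have been produced through a chain of speculative forwarding decisions. This requires an auxiliary lemma tracing every resolved value back to either a retired store or to memory, arguing that any forwarding inconsistency triggers a hazard rollback before retirement; making this precise amounts to exhibiting an invariant on the $\{j,\a\}$ annotations of resolved loads that happen to reach $\textsf{MIN}(\buf)$, namely that the annotated source still matches the true sequential memory state at the load's program point.
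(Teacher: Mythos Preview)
Your approach and the paper's both induct on $N$, but the decompositions are genuinely different. The paper's proof first \emph{removes all misspeculated steps} from $D$ (appealing to the definition that such steps can be deleted without changing the final configuration), and then performs a \emph{schedule-reordering} argument: it extracts the subsequence $D_1'$ of directives that apply to the indices of the first $N{-}1$ retired instructions, observes that the complementary directives $D_1''$ all apply to strictly later buffer indices and hence cannot affect $D_1'$, reorders $D_1''$ after $D_1'$, and applies the induction hypothesis to $D_1'$ (which now ends in a terminal configuration). The sequential schedule is then constructed explicitly as the concatenation of the inductively obtained $D_{\seq}'$ with a short sequential tail $D_{\seq}''$ for the final retired instruction.

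Your proposal instead leaves the schedule intact and argues a semantic invariant directly: hazard detection and rollback ensure that every value reaching $\textsf{MIN}(\buf)$ agrees with its sequential counterpart. This is a legitimate alternative and is arguably more explanatory about \emph{why} the semantics is sound, but it front-loads precisely the case analysis the paper sidesteps. By stripping misspeculated steps up front, the paper never has to reason about forwarding chains, alias-prediction annotations, or the $\{j,a\}$ invariant you flag as the main obstacle; once rollbacks are gone, the remaining directives are a permutation of a sequential run and the commutation is purely structural. Your route would also need to argue explicitly that the \emph{sequence of program points} of retired instructions matches the sequential control-flow path (so that ``the effect sequential execution would have at that program point'' is well-defined), which you do not mention. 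Both proofs eventually invoke Lemma~\ref{lem:seq-correctness}, but the paper's reordering trick trades your semantic case analysis for a single syntactic observation about buffer indices.
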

\begin{proof}
  Since we can always remove all misspeculated steps from
  any well-formed execution without affecting the final configuration, we
  assume $D_1$ has no misspeculated steps.

  Suppose $N = 0$. Then the theorem is trivially true.
  We proceed by induction on $N$.

  Let $D_1'$ be the subsequence of $D_1$ containing the first $N-1$ $\retire$ directives
  and the directives that apply to the same indices of the first $N-1$ $\retire$ directives.
  Let $D_1''$ be the complement of $D_1'$ with respect to $D_1$.
  All directives in $D_1''$ apply to indices later than any directive in $D_1'$, and thus cannot
  affect the execution of directives in $D_1'$. Thus $D_1'$ is a well-formed schedule
  and produces execution $C \bigstep[N-1]{D_1'}{O_1'} C_1'$.

  Since $D_1$ contains no misspeculated steps, the directives in $D_1''$
  can be reordered after the directives in $D_1'$. Thus $D_1''$ is a well-formed
  schedule for $C_1'$, producing execution $C_1' \bigstep[1]{D_1''}{O_1''} C_1''$
  with $C_1'' \approx C_1$.
  If $C_1$ is terminal, then $C_1''$ is also terminal and $C_1'' = C_1$.

  By our induction hypothesis, we know there exists $D_{\seq}'$ such that $C \bigstep[N-1]{D_{\seq}'}{O_2'} C_2'$.
  Since $D_1'$ contains equal numbers of $\fetch$ and $\retire$ directives, ends with a $\retire$, and contains no
  misspeculated steps, $C_1'$ is terminal.
  Thus $C_1' = C_2'$.

  Let $D_{\seq}''$ be the subsequence of $D_1''$ containing the $\retire$ directive in $D_1''$ and the directives that
  apply to the same index. $D_{\seq}''$ is sequential with
  respect to $C_1'$ and produces execution $C_1' \bigstep[1]{D_{\seq}''}{O_2''} C_2''$
  with $C_2'' \approx C_1'' \approx C_1$.
  If $C_1''$ is terminal, then $D_{\seq}'' = D_1''$ and thus $C_2'' = C_1'' = C_1$.

  Let $D_{\seq} = D_{\seq}' \| D_{\seq}''$. $D_{\seq}$ is thus itself sequential
  and produces execution $C \bigstep{\seq}{(O_2' \| O_2'')} C_2''$, completing our proof.
\end{proof}

\begin{corollary}[General consistency]
\label{thm:correctness}
Let  $C$  be an initial configuration.
If $C \bigstep{D_1}{O_1} C_1$ and $C \bigstep{D_2}{O_2} C_2$, then $C_1 \approx C_2$.
Furthermore, if $C_1$ and $C_2$ are both terminal then $C_1 = C_2$.
\end{corollary}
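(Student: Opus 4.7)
The plan is to derive this corollary directly from Theorem~\ref{thm:seq-correctness-full} (Sequential equivalence) together with Lemma~\ref{lem:seq-correctness}. The essential observation is that sequential executions from a common initial configuration are a canonical form: every arbitrary well-formed execution reduces to one, so two arbitrary executions must reduce to equivalent sequential executions, which are themselves unique.

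Concretely, I would first apply Theorem~\ref{thm:seq-correctness-full} to the hypothesis $C \bigstep{D_1}{O_1} C_1$ to obtain a sequential execution $C \bigstep{\seq}{O_1^{\seq}} C_1^{\seq}$ with $C_1 \approx C_1^{\seq}$, and moreover $C_1 = C_1^{\seq}$ whenever $C_1$ is terminal. Applying the same theorem to $C \bigstep{D_2}{O_2} C_2$ yields a sequential execution $C \bigstep{\seq}{O_2^{\seq}} C_2^{\seq}$ with $C_2 \approx C_2^{\seq}$, and $C_2 = C_2^{\seq}$ when $C_2$ is terminal. Both sequential executions necessarily end in terminal configurations, since by definition a sequential schedule retires every fetched instruction, and well-formed schedules never leave the buffer with pending instructions that will not be retired.

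Next I would invoke Lemma~\ref{lem:seq-correctness}, which asserts that any two sequential executions from the same initial configuration terminate in the same configuration; hence $C_1^{\seq} = C_2^{\seq}$. Combining the three relations gives $C_1 \approx C_1^{\seq} = C_2^{\seq} \approx C_2$, and since $\approx$ is transitive (being essentially equality modulo speculative state, \ie equality of the register file and memory components), we conclude $C_1 \approx C_2$. For the second clause, if both $C_1$ and $C_2$ are terminal, the strengthened conclusion of Theorem~\ref{thm:seq-correctness-full} upgrades both $\approx$ relations to equalities, giving $C_1 = C_1^{\seq} = C_2^{\seq} = C_2$.

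There is essentially no hard step here: the real mathematical content already lives in Theorem~\ref{thm:seq-correctness-full} (which handles misspeculation and reordering) and in Lemma~\ref{lem:seq-correctness} (which handles determinism of sequential runs). The only minor point worth being careful about is the transitivity of $\approx$ and verifying that the sequential executions produced by Theorem~\ref{thm:seq-correctness-full} do indeed end in terminal configurations, so that the ``terminal'' clause of that theorem applies and lets us promote $\approx$ to $=$ in the second part of the corollary.
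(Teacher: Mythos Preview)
Your proposal is correct and matches the paper's own proof essentially line for line: apply Theorem~\ref{thm:seq-correctness-full} to each of $D_1$ and $D_2$ to obtain sequential runs, then use Lemma~\ref{lem:seq-correctness} to identify the two sequential endpoints, and chain through $\approx$ (or $=$ in the terminal case). The extra care you take about transitivity of $\approx$ and terminality of the sequential runs is fine but not a departure; the paper simply leaves these implicit.
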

\begin{proof}
  By \Cref{thm:seq-correctness-full}, there exists $D_{seq}'$ such that executing
  with $C$ produces $C_1' \approx C_1$ (resp.~$C_1' = C_1$).
  Similarly, there exists $D_{seq}''$ that produces $C_2' \approx C_2$ (resp.~$C_2' = C_2$).
  By \Cref{lem:seq-correctness}, we have $C_1' = C_2'$. Thus $C_1 \approx C_2$ (resp.~$C_1 = C_2$).
\end{proof}

\subsection{Security}
\begin{restatable}[Label stability]{theorem}{thmLabelStability}
\label{thm:label-stability}
Let $\l$ be a label in the lattice $\Labels$.
If $C \bigstep{D_1}{O_1} C_1$ and $\forall o \in O_1: \l \notin o$,
then $C \bigstep{seq}{O_2} C_2$ and $\forall o \in O_2: \l \notin o$.
\end{restatable}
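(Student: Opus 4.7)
The plan is to prove this by induction on $N$, the number of retired instructions, paralleling the structure of the proof of \Cref{thm:seq-correctness-full}. The base case $N = 0$ is immediate: both schedules are empty and produce no observations. For the inductive step, I split $D_1 = D_1' \mathbin{\|} D_1''$, where $D_1'$ collects the directives leading up to the $(N-1)$-th retire together with all directives applying to instructions retired in that prefix, and $D_1''$ contains the remainder. As established inside the proof of \Cref{thm:seq-correctness-full}, $D_1'$ is itself a well-formed schedule reaching a terminal configuration $C_1'$, and its observations $O_1'$ are a subsequence of $O_1$, hence also $\l$-free. By induction, the canonical sequential execution from $C$ reaches $C_2' = C_1'$ producing observations $O_2'$ that contain no $\l$-labeled observation.

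It then suffices to show that the canonical sequential fetch-execute-retire of the next instruction, starting from $C_1'$, produces no $\l$-labeled observation. I proceed by case analysis on $C_1'.\mem(C_1'.\n)$. In every case, the label attached to each sequential observation is a join of labels drawn from operand resolution against $C_1'$'s registers and memory. The corresponding successfully retired execution step inside $D_1''$, which also starts from the terminal $C_1'$, resolves its operands through the same register-resolve chain and therefore produces an observation with an identical label: a $\load$ yields $\oread a_{\l_a}$ or $\ofwd a_{\l_a}$ with the same $\l_a$; a $\store$ yields $\ofwd a_{\l_a}$ at address resolution and $\owrite a_{\l_a}$ at retire with the same $\l_a$; a $\cond$ or $\jmpi$ yields $\ojump n_\l$ with the same $\l$; and $\op$ and $\lfence$ emit no observations at all. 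Since these matching observations lie in $O_1$ and are $\l$-free by hypothesis, so are their sequential counterparts, which closes the induction.

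The main obstacle will be making this correspondence precise in the presence of reordering and misspeculation inside $D_1''$: it may contain auxiliary $\omiss$ observations and additional $\ofwd$ or $\oread$ observations stemming from aliasing mispredictions or stale forwards. I will dispatch these by two observations: $\omiss$ carries no data label, so it is vacuously $\l$-free; and every rolled-back $\ofwd$ or $\oread$ is later re-emitted (on the successful path) with the exact label computed from $C_1'$'s state, which is the observation I am comparing against the sequential one. Combined with determinism of the register-resolve function at terminal configurations—ensuring speculative and sequential operand lookups agree—this yields the inclusion of sequential observation labels in the set of $D_1''$ observation labels, and the theorem follows.
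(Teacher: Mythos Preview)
Your overall structure---induction on $N$ in parallel with \Cref{thm:seq-correctness-full}---matches the paper, and your case analysis on the $N$-th instruction is essentially correct once the setup is in place. But there is a real gap in how you handle misspeculation.

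You invoke the decomposition $D_1 = D_1' \mathbin{\|} D_1''$ and the claims that $D_1'$ is well-formed, reaches a terminal $C_1'$, and that $D_1''$ can be replayed from $C_1'$, citing the proof of \Cref{thm:seq-correctness-full}. But that proof establishes these facts only \emph{after} assuming $D_1$ has no misspeculated steps; the reordering ``directives in $D_1''$ apply to later indices and cannot affect $D_1'$'' is exactly what fails when rollbacks reuse buffer indices. You then try to deal with misspeculation inside $D_1''$ by arguing about $\omiss$ and re-emitted observations---but that only addresses which observations appear in $O_1''$, not whether the decomposition and reordering are valid to begin with. So as written, the inductive step rests on a lemma whose hypothesis you have not discharged.

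The paper closes this gap up front: it first strips all misspeculated steps from $D_1$, observing that the resulting trace $O_1^*$ is a subsequence of $O_1$ and hence still $\l$-free, and only then runs the \Cref{thm:seq-correctness-full} machinery. With misspeculation gone, the decomposition is sound, and the remaining argument reduces to a clean subsequence claim: $D''_{\seq}$ is a subsequence of $D_1''$, so $O_2''$ is the corresponding subsequence of $O_1''$---with the single caveat that a $\load$ which forwarded in $D_1''$ may instead hit memory in $D''_{\seq}$, changing $\ofwd a_{\l_a}$ to $\oread a_{\l_a}$ but preserving the label $\l_a$. Your per-instruction case analysis reaches the same conclusion and is fine as an alternative to the subsequence argument, but you should adopt the paper's initial pruning step; once you do, your paragraph about handling $\omiss$ and rolled-back reads becomes unnecessary.
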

\begin{proof}
  Let $D_1^*$ be the schedule given by removing all misspeculated steps from $D_1$.
  The corresponding trace $O_1^*$ is a subsequence of $O_1$, and hence $\forall o \in O_1^*: \l \notin o$.
  We thus proceed assuming that execution of $D_1$ contains no misspeculated steps.

  Our proof closely follows that of \Cref{thm:seq-correctness-full}.
  When constructing $D_1'$ and $D_1''$ from $D_1$ in the inductive step, we know that
  all directives in $D_1''$ apply to indices later than any directive in $D_1'$, and cannot affect
  execution of any directive in $D_1'$. This implies that $O_1'$ is the subsequence of $O_1$
  that corresponds to the mapping of $D_1'$ to $D_1$.

  Reordering the directives in $D_1''$ after $D_1'$ do not affect the observations produced by
  most directives. The exceptions to this are $\executex$ directives for $\load$ instructions
  that would have received a forwarded value: after reordering, the $\store$ instruction they
  forwarded from may have been retired, and they must fetch their value from memory.
  However, even in this case, the address $a_{\l_a}$ attached to the observation does not change.
  Thus $\forall o \in O_2'': \l \notin o$.

  Continuing the proof as in \Cref{thm:seq-correctness-full}, we create schedule $D'_{\seq}$
  (with trace $O_2'$)
  from the induction hypothesis and $D''_{\seq}$
  (with trace $O_2''$)
  as the subsequence of $D_1''$ of directives
  applying to the remaining instruction to be retired. As noted before, executing the
  subsequence of a schedule produces the corresponding subsequence of the
  original trace; hence $\forall o \in O_2'': \l \notin o$.

  The trace of the final (sequential) schedule $D_{\seq} = D'_{\seq} \| D''_{\seq}$ is
  $O_2' \| O_2''$. Since $O_2'$ satisfies the label stability property via the induction hypothesis,
  we have $\forall o \in O_2' \| O_2'': \l \notin o$.

\end{proof}

By letting $\l$ be the label $\secret$, we get the following corollary:
\begin{corollary}[Secrecy]
\label{cor:label-secrecy}
If speculative execution of $C$ under schedule $D$ produces a trace $O$ that contains no $\secret$ labels,
then sequential execution of $C$ will never produce a trace that contains any $\secret$ labels.
\end{corollary}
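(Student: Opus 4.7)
The plan is to derive the corollary as an immediate specialization of Theorem~\ref{thm:label-stability}. I would fix the lattice label $\l$ to be $\secret$. The hypothesis of the corollary says exactly that $C \bigstep{D}{O} C_1$ with $\forall o \in O: \secret \notin o$, which matches the antecedent of the label stability theorem for this choice of $\l$. Applying the theorem then produces a sequential execution $C \bigstep{\seq}{O_2} C_2$ with $\forall o \in O_2: \secret \notin o$, which is exactly the conclusion sought.

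The only thing to verify is that the phrase ``sequential execution of $C$'' in the corollary statement is unambiguous, so that ruling out any particular sequential trace is enough to rule out \emph{the} sequential trace. This follows from the fact that the canonical sequential schedule is fixed for each program (the paper's aside on sequential execution), combined with the determinism observations used in Lemma~\ref{lem:seq-correctness} and the general consistency Corollary~\ref{thm:correctness}: any two sequential terminal executions of the same initial configuration coincide, so the trace $O_2$ produced above is the sequential trace. There is no genuine obstacle here; the corollary is essentially a restatement of the theorem with $\l=\secret$, and all of the interesting reasoning (handling rollbacks, reordering of directives, and the forwarding-vs-memory case that preserves address labels) has already been discharged inside the proof of Theorem~\ref{thm:label-stability}.
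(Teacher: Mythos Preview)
Your proposal is correct and matches the paper's own treatment: the paper simply states the corollary as the instance of Theorem~\ref{thm:label-stability} with $\l=\secret$, with no additional argument. Your extra paragraph about the well-definedness of ``the'' sequential execution is a reasonable clarification (tying the theorem's $\bigstepnon{\seq}{}$ to the canonical sequential schedule), but it goes slightly beyond what the paper spells out.
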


With this, we can prove the following proposition:
\begin{proposition}
  For a given initial configuration $C$ and well-formed schedule $D$,
  if $C$ is \SONI with respect to $D$, and execution of $C$ with $D$ results in
  a terminal configuration $C_1$, then $C$ is also sequentially constant-time.
\end{proposition}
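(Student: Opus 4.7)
The plan is to derive sequential constant-time from speculative constant-time (SCT) by composing the SCT hypothesis with the sequential equivalence result, Theorem~\ref{thm:seq-correctness-full}. Fix an arbitrary $C' \loweqv C$; the goal is to exhibit a sequential run of $C'$ whose trace and final-state $\loweqv$-class match those of $C$. From $C \bigstep{D}{O} C_1$ with $C_1$ terminal, the SCT hypothesis immediately delivers $C' \bigstep{D}{O'} C_1'$ with $O = O'$ and $C_1 \loweqv C_1'$. Since the counts of $\fetch$ and $\retire$ directives in $D$ are properties of $D$ alone, and both runs begin with an empty reorder buffer, $C_1'$ is also terminal. Applying the strengthened conclusion of Theorem~\ref{thm:seq-correctness-full} to each run then yields $C \bigstep{\seq}{O_s} C_1$ and $C' \bigstep{\seq}{O_s'} C_1'$, so the $\loweqv$-equivalence of the final configurations is immediate.

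What remains is to prove $O_s = O_s'$. My plan is to inspect the construction used in the proof of Theorem~\ref{thm:seq-correctness-full}, which manufactures the sequential schedule from $D$ by deleting misspeculated directives (in the sense of Definition~\ref{def:misspeculatedsteps}) and then reordering the surviving directives into per-instruction fetch/execute/retire groups. The proof of Theorem~\ref{thm:label-stability} already notes that this rearrangement preserves both the address and the label attached to each surviving observation, and that subsequences of the directive list produce the corresponding subsequences of the trace. It therefore suffices to show that the same subset of $D$ is classified as misspeculated in the $C$-run and in the $C'$-run; once this is established, the surviving per-directive observations agree pointwise (since $O = O'$) and the reordered sequentializations coincide.

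The rollback-alignment step is, I expect, the main obstacle. My approach is a forward induction on the position within $D$: each rule that removes entries from $\buf$---namely \textsc{cond-execute-incorrect}, \textsc{store-execute-addr-hazard}, and their analogues in Section~\ref{sec:alias}---emits an $\omiss$ observation, and the range of indices it erases is a deterministic function of the directive at hand together with the prefix of $D$ already executed. Combined with $O = O'$, this forces rollbacks to occur at the same steps and to discard the same $\buf$ ranges in both runs, so the two misspeculated-step sets coincide. Together with the $\loweqv$-equivalence of terminal configurations established above, this gives $O_s = O_s'$ and hence sequential constant-time for $C$.
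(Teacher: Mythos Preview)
Your final-configuration argument matches the paper exactly. The divergence is entirely in how you establish $O_s = O_s'$.

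The paper does not trace the sequentialization construction at all. It instead routes through Corollary~\ref{cor:label-secrecy}: from the SCT hypothesis it argues that no $\secret$ label appears in the speculative trace $O$, hence by label stability none appears in $O_\seq$; and since every $\kpub$-labelled observation is computed purely from public data, two low-equivalent initial configurations must produce identical sequential traces. This sidesteps any step-by-step alignment of the two runs and keeps the whole argument to a few lines.

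Your direct route can in principle be made to work, but the rollback-alignment justification you give is wrong as stated. The range removed by \textsc{store-execute-addr-hazard} is \emph{not} determined by the directive and the $D$-prefix alone: the cutoff $k$ is the least later index whose recorded load address $a_k$ conflicts with the store's resolved address $a$, and both $a$ and the various $a_k$ are data-dependent, hence potentially secret-dependent. What actually rescues the argument is that each such address was already emitted in an earlier $\oread$ or $\ofwd$ observation, so $O = O'$ forces all resolved addresses---and therefore the hazard cutoff $k$---to coincide across the two runs; you need to make this dependence on leaked addresses explicit rather than asserting determinism in the prefix. There is a second loose end: the proof of Theorem~\ref{thm:label-stability} only guarantees that the \emph{address and label} survive the reordering; the observation \emph{kind} can change (a speculative $\ofwd$ becoming a sequential $\oread$). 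For $O_s = O_s'$ you must also argue these kind-changes coincide across runs, which requires the forwarding indices $j$---not just the addresses---to agree. That does follow once addresses and buffer structure are aligned, but it is another step you have elided.
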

\begin{proof}
  Since $C$ is \SONI, we know that for all $C' \loweqv C$, we have $ C \bigstep{D}{O} C_1 $ and $ C' \bigstep{D}{O'} C_1' $
  where $C_1 \loweqv C_1'$ and $O = O'$. By \Cref{thm:seq-correctness-full}, we know there exist
  sequential executions such that $C \bigstep{\seq}{O_\seq} C_2$ and $C' \bigstep{\seq}{O'_\seq} C'_2$. Note that the
  two sequential schedules need not be the same.

  $C_1$ is terminal by hypothesis. Execution of $C'$ uses the same schedule
  $D$, so $C_1'$ is also terminal. Since we have $C_1 = C_2$ and $C_1' = C_2'$, we
  can lift $C_1 \loweqv C_1'$ to get $C_2 \loweqv C_2'$.

  To prove the trace property $O_\seq = O'_\seq$, we note that if $O_\seq \neq O'_\seq$, then
  since $C_2 \loweqv C_2'$,
  it must be the case that there exists some $o \in O_\seq$ such that $\secret \in O_\seq$.
  Since this is also true for $O$ and $O'$, we know that there exist no observations in either $O$ or $O'$
  that contain $\secret$ labels. By \Cref{cor:label-secrecy}, it follows that no $\secret$ labels appear
  in either $O_\seq$ or $O'_\seq$, and thus $O_\seq = O'_\seq$.
\end{proof}

\subsection{Soundness of \tool}
\label{sec:tool-soundness}

\begin{definition}[Affecting an index]
  We say a directive $d$ \emph{affects an index} $i$ if:
  \begin{itemize}
    \item $d$ is a $\fetch$-type directive and would produce a new mapping in $\buf$ at index $i$.
    \item $d$ is an $\executex$-type directive and specifies index $i$ directly (\eg $\executei$).
    \item $d$ is a $\retire$ directive and would cause the instruction at $i$ in $\buf$ to be removed.
  \end{itemize}
\end{definition}

\begin{definition}[Path function]
  The function $Path(C, D)$ produces the sequence of branch choice (from fetching $\cond$ instructions)
  and store-forwarding information (when executing $\load$ instructions)
  when executing $D$ with initial configuration $C$.
  That is, for a schedule $D$ without misspeculated steps:
  \begin{align*}
    Path(C, \emptyset) &= [] \\
    Path(C, D\|d) &= \begin{cases}
      Path(C, D) ; (i,b) ,& d = \fetchg{b} \\
      Path(C, D) ; (i,j) ,& d \textnormal{ produces }v_\l\{j,\a\} \\
      Path(C, D) ; (i,\bot) ,& d \textnormal{ produces }v_\l\{\bot,\a\} \\
      Path(C, D) ,& \textnormal{otherwise}
    \end{cases}
  \end{align*}
  where $d$ affects index $i$.
  If $D$ has misspeculated steps, then $Path(C, D) = Path(C, D^*)$
  where $D^*$ is the subset of $D$ with misspeculated steps removed.
  We write simply $Path(D)$ when $C$ is obvious.
\end{definition}

For the \Cref{lem:tool-fetch,lem:tool-execute,lem:tool-retire}, we start with the following shared assumptions:
\begin{itemize}
  \item $C$ is an initial configuration.
  \item $D_1$ and $D_2$ are nonempty schedules.
  \item $C \bigstepnon{D_1}{O_1} C_1$ and $C \bigstepnon{D_2}{O_2} C_2$.
  \item $Path(C, D_1) = Path(C, D_2)$.
  \item $D_1 = D_1'\|d_1$ and $D_2 = D_2'\|d_2$ and $d_1 = d_2$.
  \item $d_1$ and $d_2$ affect the same index $i$ in the their respective reorder buffers.
\end{itemize}
Let $o_1$ (resp.~$o_2$) be the observation produced during execution of $d_1$ (resp.~$d_2$).

\begin{lemma}[Fetch]
  \label{lem:tool-fetch}
  If $d_1$ and $d_2$ are both $\fetch$-type directives, then $C_1.\n = C_2.\n$ and $C_1.\buf[i] = C_2.\buf[i]$.
\end{lemma}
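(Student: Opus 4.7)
The plan is to prove this fetch lemma by mutual induction on $|D_1| + |D_2|$, together with the companion Execute and Retire lemmas that follow. The invariant I would carry through the induction is that after executing any two prefixes $D_1'$ and $D_2'$ with $Path(C, D_1') = Path(C, D_2')$, the resulting configurations agree on (i)~the current program point, (ii)~the contents of $\mem$ at any address holding program text, and (iii)~the domain of the reorder buffer up to the relevant index. Extracting this invariant for the pre-fetch configurations $C_1'$ and $C_2'$ (where $D_j = D_j' \| d_j$) is the one place the argument needs the inductive hypothesis; note that $Path(C, D_1') = Path(C, D_2')$ follows from the global hypothesis $Path(C, D_1) = Path(C, D_2)$ together with $d_1 = d_2$, because the last directive contributes the same entry (if any) to both paths.

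Given the invariant, I would proceed by case analysis on the instruction $\mem(C_1'.\n) = \mem(C_2'.\n)$ being fetched, walking through \textsc{simple-fetch}, \textsc{cond-fetch}, \textsc{jmpi-fetch}, \textsc{call-direct-fetch}, and the two \textsc{ret-fetch} rules. In each case the rule is a deterministic function of four inputs: the current program point $\n$, the physical instruction $\mem(\n)$, the guess carried by the directive (as in $\fetchg{\ktrue}$, $\fetchg{n'}$), and the fresh index $\textsf{MAX}(\buf)+1$ at which the transient instruction is inserted. Items (1) and (2) agree by the invariant; item (3) agrees because $d_1 = d_2$; and item (4) agrees because by hypothesis $d_1$ and $d_2$ affect the same index $i$. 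Therefore the rule fires in exactly the same way in both executions, yielding identical new program points and identical transient entries at index $i$, which is the conclusion $C_1.\n = C_2.\n$ and $C_1.\buf[i] = C_2.\buf[i]$. For the $\call$ and $\ret$ cases, the same reasoning extends to the bundle of auxiliary transient instructions placed at indices $i+1, i+2, \ldots$, but the lemma only asserts the match at $i$.

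The main obstacle is establishing the pre-fetch invariant, and in particular that $C_1'.\n = C_2'.\n$. The subtlety is that $\n$ can change not only via a fetch but also through rollback steps inside \textsc{cond-execute-incorrect}, \textsc{store-execute-addr-hazard}, \textsc{load-execute-addr-hazard}, and \textsc{load-execute-addr-mem-hazard}. Showing that these rollbacks reset $\n$ to the same value in both executions is exactly what the companion Execute lemma must provide, which is why the three lemmas are proved together. The program-memory clause of the invariant is easier: $\mem$ is modified only by \textsc{store-retire}, and since $Path(D_1') = Path(D_2')$ records the matched sequence of forwarding/memory-resolution choices, the retired stores and the addresses they write to coincide in the two executions; in particular program-text addresses are left untouched (the standard no-self-modifying-code assumption), so $\mem(\n)$ at the fetched location is the same. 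Closing the mutual induction at this point gives the fetch lemma.
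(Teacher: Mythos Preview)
Your approach differs substantially from the paper's. The paper proves this lemma directly, in isolation, with no induction at all: it simply observes that fetches happen in program order, that the shared $Path$ pins down the (speculative) control-flow path, and hence that the physical instruction landing at a given buffer index $i$ is already determined; if that instruction is a $\cond$, equality of $Path$ further forces the same guess $b$. The Execute lemma is proved \emph{afterwards}, by a separate induction on $\abs{D_1}$ that uses the Fetch lemma (through its corollary) as an already-established black box. There is no mutual dependence in the paper's development.

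There is also a genuine gap in the scheme you propose. The invariant you want to carry---that any two prefixes with equal $Path$ yield configurations agreeing on $\n$---is too strong to survive the induction. Consider a single branch whose correct outcome is $\ktrue$, and compare the prefixes $D_1'=[\fetchg{\kfalse}]$ and $D_2'=[\fetchg{\kfalse};\,\execute{1}]$. Neither contains a misspeculated step, so both have $Path=[(1,\kfalse)]$; yet after $D_1'$ the program point sits at the $\kfalse$ target, while after $D_2'$ the \textsc{cond-execute-incorrect} step has already reset it to the $\ktrue$ target. You cannot invoke the companion Execute lemma to reconcile these two prefixes, because they do not satisfy that lemma's shared hypotheses---their final directives are not even the same. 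Thus $Path$ equality does not force step-by-step agreement on $\n$; it only gives the weaker index-to-instruction correspondence that the paper's argument exploits directly.
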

\begin{proof}
  Since fetches happen in-order, the index $i$ of a given physical instruction
  along a control flow path is deterministic.
  Both $D_1$ and $D_2$ both have the same (control flow) path.
  Since by hypothesis both $d_1$ and $d_2$ affect the same index $i$, $d_1$ and $d_2$ must necessarily both be fetching
  the same physical instruction.
  Furthermore, since $Path(D_1) = Path(D_2)$, if the fetched instruction is a $\cond$ instruction, then both $d_1$ and $d_2$ must have
  made the same guess.
  The lemma statements all hold accordingly.
\end{proof}

\begin{corollary}
  \label{lem:same-phys}
  If $D_1^*$ and $D_2^*$ are nonempty schedules such that
  $C \bigstepnon{D_1^*}{} C_1^*$ and $C \bigstepnon{D_2^*}{} C_2^*$ and
  $Path(C, D_1^*) = Path(C, D_2^*)$, then:
  For any $i \in C_1^*.\buf$, if $i \in C_2^*.\buf$, then
  both $C_1^*.\buf[i]$ and $C_2^*.\buf[i]$ were derived from the same physical instruction.
\end{corollary}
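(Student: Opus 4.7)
The plan is to reduce the corollary to Lemma~\ref{lem:tool-fetch} by establishing that the two schedules fetch the same sequence of physical instructions and that any shared index $i$ in the final buffers must correspond to the same fetch position in both schedules. First, I would normalize by removing misspeculated steps from both $D_1^*$ and $D_2^*$: by Definition~\ref{def:misspeculatedsteps} this preserves the final configurations, and by the definition of $Path(C, D)$ it leaves the path unchanged. So without loss of generality I assume both schedules are already misspeculation-free.

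Next I would prove, by induction on the fetch count, that the $k$-th $\fetch$ in $D_1^*$ and the $k$-th $\fetch$ in $D_2^*$ occur at the same program point $\n$ and record the same speculative choice. The inductive invariant is that after $k$ fetches the current program point is identical in both executions. The base case holds since both start from $C$; for the inductive step, $\n$ advances as a deterministic function of the prior $\n$, the physical instruction at $\mem(\n)$ (common to both executions), and, for $\cond$ or $\jmpi$ fetches, the directive's branch or target guess. Because $Path(C, D_1^*) = Path(C, D_2^*)$, those guesses agree, so the successor program points agree. This already yields that the $k$-th physical instruction fetched is the same in both schedules.

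Finally, I would map a shared index $i \in C_1^*.\buf \cap C_2^*.\buf$ back to the fetch that last populated it. Because the buffer's domain stays contiguous, with fetches using $\textsf{MAX}(\buf) + 1$ and retires removing $\textsf{MIN}(\buf)$, the position within the fetch sequence of the populating fetch is determined by the pattern of retires preceding it. For $i$ to remain in both final buffers, the two populating fetches must correspond to the same position in the (identical) fetch sequence established in the previous step; otherwise one schedule would have had to retire the occupant of $i$ before reusing the index, contradicting $i$'s survival in both final buffers. Combining this alignment with the previous step and invoking Lemma~\ref{lem:tool-fetch} gives that $C_1^*.\buf[i]$ and $C_2^*.\buf[i]$ are derived from the same physical instruction.

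The main obstacle will be the last step: reasoning precisely about index reuse once the buffer empties and refills. I would expect to formalize this with a secondary induction on the number of times the buffer has been fully emptied before the fetch responsible for index $i$, tracking how the contiguous-domain invariant constrains the allowed fetch/retire interleavings so that a surviving shared index pins down a unique absolute fetch position in each schedule.
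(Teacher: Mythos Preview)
The paper's own proof is a terse three lines: take the prefix of each $D_k^*$ ending at the latest $\fetch$ that affects $i$, then invoke Lemma~\ref{lem:tool-fetch} on those prefixes. It neither checks that lemma's side conditions (equal $Path$ on the prefixes, $d_1=d_2$) nor mentions index reuse; your plan is therefore considerably more careful than what the paper actually writes.

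That said, your step~3 argument does not work. You claim that if the populating fetches sat at different absolute positions, ``one schedule would have had to retire the occupant of $i$ before reusing the index, contradicting $i$'s survival in both final buffers.'' There is no contradiction here: a schedule may retire the original occupant of $i$, let the buffer empty, fetch again into index $i$, and have that \emph{new} occupant survive to the end. Worse, the obstacle you identify is not a mere technicality to be discharged by a secondary induction---the corollary as stated is false once the buffer is allowed to empty. With a straight-line program of $\op$ instructions and $C$ initial, take $D_1^*=[\fetch;\,\execute{1};\,\retire;\,\fetch]$ and $D_2^*=[\fetch]$: both paths are empty (no branches or executed loads contribute to $Path$), index $1$ lies in both final buffers, yet $C_1^*.\buf[1]$ derives from the second physical instruction while $C_2^*.\buf[1]$ derives from the first. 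So your proposed secondary induction cannot succeed, and the paper's proof has the same gap; the statement needs an additional hypothesis (for instance, that the reorder buffer never becomes empty) that neither version supplies.
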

\begin{proof}
  Let $D_1$ be the prefix of $D_1^*$ such that the final directive in $D_1$ is the latest $\fetch$
  that affects $i$. Let $D_2$ be similarly defined w.r.t. $D_2^*$. Then by \Cref{lem:tool-fetch},
  $D_1$ and $D_2$ both fetch the same physical instruction to index $i$.
\end{proof}

\begin{lemma}
  \label{lem:tool-execute}
  If $d_1$ and $d_2$ are both $\executex$-type directives, then $C_1.\buf[i] = C_2.\buf[i]$ and $o_1 = o_2$.
\end{lemma}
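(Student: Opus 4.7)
The plan is to build on Corollary \ref{lem:same-phys} to conclude that the reorder buffer entries at index $i$ just before executing $d_1$ and $d_2$ are transient forms of the same physical instruction. I would then case-split on the instruction kind and check that the execute rule is driven entirely by state that agrees between the two configurations, so that it produces the same updated buffer entry and the same observation.

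First I would strengthen the induction by proving Lemmas \ref{lem:tool-fetch}, \ref{lem:tool-execute}, and \ref{lem:tool-retire} as a single simultaneous induction on the length of the schedule prefixes, with a joint invariant: whenever $C \bigstepnon{D_1'}{} C_1'$ and $C \bigstepnon{D_2'}{} C_2'$ with $Path(C,D_1') = Path(C,D_2')$, and both prefixes end with a directive affecting the same index, the configurations $C_1'$ and $C_2'$ agree on the register map $\reg$, on memory $\mem$, on the current program point $\n$, and on $\buf(j)$ at every index $j$ present in both buffers. Lemma \ref{lem:tool-fetch} already handles $\n$ and the new buffer entry; Lemma \ref{lem:tool-retire} will handle the updates to $\reg$ and $\mem$; the present lemma provides agreement on resolved buffer entries. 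Under this invariant, $\regmap{\vec{\rv}}$ and $\addrf{\vec{v_\l}}$ evaluate identically on both sides.

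I would then case-split on the physical instruction underlying $\buf(i)$. For $\op$, resolving $\vec{\rv}$ yields the same $\vec{v_\l}$, hence the same $\ival{r}{v_\l}$ and no observation. For $\cond$ and $\jmpi$, Lemma \ref{lem:tool-fetch} gives the same speculated target $n_0$, and identical resolution of the condition selects the same rule (\textsc{cond-execute-correct} versus \textsc{cond-execute-incorrect}) with the same $\ojump$ and possible $\omiss$ observation. For $\load$, the common address $\a$ and label $\l_a$ together with agreement on prior $\buf$ contents select the same execute rule among \textsc{load-execute-nodep}, \textsc{load-execute-forward}, \textsc{load-execute-addr-ok}, \textsc{load-execute-addr-hazard}, \textsc{load-execute-addr-mem-match}, and \textsc{load-execute-addr-mem-hazard}; the Path equality additionally pins down the same forwarding index $j$ (or $\bot$) recorded in the resolved value, so the updated entry $\ival{\rf}{v_\l\{j,\a\}}^n$ and the $\oread$/$\ofwd$ observation coincide. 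For $\store$, both $\executevalue$ and $\executeaddr$ sub-steps read from agreeing state; the hazard quantifier in \textsc{store-execute-addr-ok}/\textsc{store-execute-addr-hazard} ranges over shared-index buffer entries annotated with the same $\{j_k,\a_k\}$, so the ok/hazard choice and the resulting $\ofwd$ (and possibly $\omiss$) observation are identical.

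The main obstacle is that the rules for $\load$ and $\store$ inspect annotations $\{j,\a\}$ on \emph{other} buffer entries that were themselves produced by prior execute directives, and the hazard rules may additionally remove a tail of the buffer. This forbids an inductive argument on the single index $i$ and forces the simultaneous induction on schedule length described above; the removal performed by \textsc{load-execute-addr-hazard}, \textsc{store-execute-addr-hazard}, and \textsc{cond-execute-incorrect} must be shown to preserve the invariant, which follows because a rollback erases the same suffix on both sides (it is driven by agreeing state). Once the invariant is carried through, each execute case reduces to a routine verification that the rule's premises and conclusions are functions of the agreed state, yielding $C_1.\buf[i] = C_2.\buf[i]$ and $o_1 = o_2$.
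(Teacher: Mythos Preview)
Your case analysis mirrors the paper's, but the inductive scaffolding you propose has a real gap: the joint invariant you want to carry---that $C_1'$ and $C_2'$ agree on $\reg$, on $\mem$, and on $\buf(j)$ for every shared index $j$---does not hold. The shared assumptions fix only $Path$ and the final directive; nothing forces the two schedules to have executed or retired the same earlier instructions. Take three $\op$ instructions with no mutual register dependencies and compare $D_1' = [\fetch;\fetch;\fetch;\execute{1};\execute{3}]$ with $D_2' = [\fetch;\fetch;\fetch;\execute{3}]$: both end with an execute on index~$3$ and both have empty $Path$, yet $\buf(1)$ is resolved in one configuration and unresolved in the other. Inserting a $\retire$ into one schedule but not the other likewise breaks literal agreement on $\reg$ (or, for a retired $\store$, on $\mem$). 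Since your $\load$ and $\store$ cases appeal directly to this literal agreement of $\reg$, $\mem$, and prior $\buf$ entries, the argument does not go through as written.

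The paper sidesteps this by not maintaining any global invariant. It performs strong induction on $|D_1|$ and, for each register $r$ read by the instruction at $i$, locates the earlier index $j$ whose execute resolved $r$, forms prefixes $D_{1,j}$ and $D_{2,j}$ ending at that execute, and applies the induction hypothesis to those. This delivers agreement only on the operand values actually consumed by the current step; whether other buffer entries, $\reg$, or $\mem$ happen to differ is immaterial, because the register-resolve function $\regmapname$ returns the same value whether the resolving instruction has already been retired into $\reg$ or still sits resolved in $\buf$. If you want to keep a simultaneous-induction shape, the invariant to carry is agreement of $\regmapname$ (and of the effective memory view through the pending-store prefix) on the operands in play, not literal equality of $\reg$, $\mem$, and $\buf$.
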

\begin{proof}
  We proceed by full induction on the size of $D_1$.

  For the base case: if $\abs{D_1} = 1$, then the lemma statements are trivial regardless of the directive $d_1$.

  We know from \Cref{lem:same-phys} that since $d_1$ and $d_2$ both affect the same index $i$,
  the two transient instruction must be derived from the same physical instruction, and thus has the same register dependencies.
  For each register dependency $r$, if the register was calculated by a transient instruction at a prior index $j$,
  we can create prefixes $D_{1,j}$ and $D_{2,j}$ of $D_1$ and $D_2$ respectively that end at the $\executex$ directive
  that resolves $r$ at buffer index $j$. By our induction hypothesis, both $D_{1,j}$ and $D_{2,j}$ calculate the same value $v_\l$ for $r$.

  We now proceed by cases on the transient instruction being executed.

  \mypara{Op, Store (value)}
  Since all dependencies calculate the same values, both instructions calculate the same value.

  \mypara{Store (address)}
  Both instructions calculate the same address. Since $Path(D_1) = Path(D_2)$, both schedules have the same pattern of store-forwarding behavior.
  Thus execution of $d_1$ causes a hazard if and only if $d_2$ causes a hazard.

  \mypara{Load}
  Both instructions calculate the same address, producing the same observations $o_1$ and $o_2$.
  Since $Path(D_1) = Path(D_2)$, either $d_1$ and $d_2$ cause the values to be retrieved from
  the same prior $\store$s, or they both load values from the same address in memory.
  By our induction hypothesis, these values will be the same, so both instructions will resolve to the same value.

  \mypara{Branch}
  Both instructions calculate the same branch condition, producing the same observations $o_1$ and $o_2$.
  Since $Path(D_1) = Path(D_2)$, execution of $d_1$ causes a misspeculation hazard if and only if $d_2$ also causes misspeculation hazard.
\end{proof}

\begin{lemma}
  \label{lem:tool-retire}
  If $d_1$ and $d_2$ are both $\retire$ directives, then $o_1 = o_2$.
\end{lemma}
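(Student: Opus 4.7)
The plan is a simple case analysis on the physical instruction underlying $\buf[i]$, reducing the retire observation to the resolved transient state which was already established to agree by \Cref{lem:tool-execute}. By \Cref{lem:same-phys}, since both $d_1$ and $d_2$ retire from index $i$ and both executions share the same $Path$, the entries $C_1'.\buf[i]$ (resp.\ $C_2'.\buf[i]$) in the configurations reached just before $d_1$ (resp.\ $d_2$) are derived from the same physical instruction. This justifies splitting into cases.

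First, I would dispatch all retire rules that emit no observation, namely \textsc{jump-retire}, \textsc{value-retire}, \textsc{fence-retire}, and \textsc{ret-retire}: in these cases both $o_1$ and $o_2$ are empty and the statement holds trivially. I would also note that \textsc{call-retire} emits an $\owrite a_{\l_a}$ observation generated by the store-piece $\buf[i+2]$; the argument for it is the same as for \textsc{store-retire} below, applied to the relevant sub-instruction.

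The only substantive case is \textsc{store-retire}. Here $o_k = \owrite a_{\l_a}$ where $a_{\l_a}$ is the resolved target address recorded in $C_k'.\buf[i]$, for $k \in \{1,2\}$. This value was placed there by an earlier \textsf{execute}~$i$:\,\textsf{addr} directive (\textsc{store-execute-addr-ok} or \textsc{store-execute-addr-hazard}) appearing in both $D_1'$ and $D_2'$: the retire rule's premise $\buf(i) = \istore{v_\l}{a_{\l_a}}$ requires address resolution to have occurred, and store-address resolution happens at most once per surviving buffer slot. Applying \Cref{lem:tool-execute} to the prefixes of $D_1$ and $D_2$ ending at this execute directive yields identical buffer entries $\istore{v_\l}{a_{\l_a}}$ and identical observations at that earlier step; in particular the same address and label annotation. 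Hence the retire observations coincide, giving $o_1 = o_2$.

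The main obstacle I anticipate is purely bookkeeping: verifying that the execute directives resolving $\buf[i]$ really do appear in both $D_1'$ and $D_2'$ and that the prefixes ending at those directives still satisfy the hypotheses of \Cref{lem:tool-execute} (in particular, equal $Path$ and no misspeculated steps relative to $i$). Since retirement of index $i$ requires $i = \textsf{MIN}(\buf)$ at the time of retire, the resolving directives cannot have been rolled back by intervening misspeculation, so taking the prefix up to each such directive inherits all needed properties and the inductive invocation of \Cref{lem:tool-execute} goes through cleanly.
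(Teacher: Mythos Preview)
Your proposal is correct and follows essentially the same approach as the paper. The paper's proof is a one-liner: it invokes \Cref{lem:tool-fetch} and \Cref{lem:tool-execute} to conclude that the transient instructions at index $i$ about to be retired are identical in both buffers, and hence the retire observations coincide. Your case analysis and explicit appeal to the execute-prefix for the $\store$ case simply unpack what the paper leaves implicit; the underlying idea is the same.
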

\begin{proof}
  From \Cref{lem:tool-fetch,lem:tool-execute} we know that for both $d_1$ and $d_2$, the transient instructions
  to be retired are the same. Thus the produced observations $o_1$ and $o_2$ are also the same.
\end{proof}

We now formally define the set of schedules examined by \tool:

\begin{definition}[Tool schedules]
  \label{def:tool-schedules}
  Given an initial configuration $C$ and a speculative window size $n$,
  we define the set of \emph{tool schedules} $D_T(n)$ recursively as follows:
  The empty schedule $\emptyset$ is in $D_T(n)$.
  If $D_0 \in D_T(n)$ and $C \bigstepnon{D_0}{} C_0$ and $\abs{C_0.\buf} < n$, then
  based on the next instruction to be fetched (and where $i$ is the index of the fetched instruction):
  \begin{itemize}
    \item $\op$: $D_0\|\fetch; \executei \in D_T(n)$.
    \item $\load$: $D_0\|\fetch; \executei \in D_T(n)$.
    \item $\store$: $D_0\|\fetch; \executeivalue \in D_T(n)$ and \newline
                    $D_0\|\fetch; \executeivalue; \executeiaddr \in D_T(n)$.
    \item $\cond$: Let $b$ be the ``correct'' path for the branch condition. Then
                   $D_0\|\fetchg{b}; \executei \in D_T(n)$ and \newline
                   $D_0\|\fetchg{\lnot b} \in D_T(n)$.
  \end{itemize}
  Otherwise, if $\abs{C_0.\buf} = n$, then we instead extend based on the oldest instruction in the reorder buffer.
  If the oldest instruction is a $\store$ with an unresolved address, and will not cause a hazard, then $D_0\|\executeiaddr; \retire \in D_T(n)$.
  Otherwise, if the oldest instruction is fully resolved, then $D_0\|\retire \in D_T(n)$.
\end{definition}

\begin{proposition}[Path coverage]
  \label{prop:path-coverage}
  If $D_1$ is a well-formed schedule for $C$ whose reorder buffer never grows beyond size $n$, then
  $\exists D_2 : Path(D_1) = Path(D_2) \land D_2 \in D_T(n)$.
\end{proposition}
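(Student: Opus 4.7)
The proof will proceed by strong induction on the length of $D_1$. The core idea is to incrementally construct $D_2 \in D_T(n)$ that mirrors the path-affecting choices in $D_1$: the branch guesses at $\cond$ fetches, and the originating-store indices (or $\bot$) at $\load$ executions. First, since $Path(C, D) = Path(C, D^*)$ when $D^*$ is obtained by deleting misspeculated steps, we may assume $D_1$ has no misspeculated steps. Second, using the reordering arguments underlying \Cref{thm:seq-correctness-full}—specifically, that independent directives commute and that fetches occur in program order—we canonicalize $D_1$ into the ``eager fetch, immediate execute, late retire'' form prescribed by \Cref{def:tool-schedules}: fetches happen as early as possible subject to the bound $n$, non-branch executes follow immediately, and retires are deferred until the buffer reaches size $n$. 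This canonicalization preserves $Path$ because it neither changes the sequence of fetches nor alters the set of stores visible to each load at the moment it is executed.

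With $D_1$ in canonical form, the inductive step is direct. Suppose the proposition holds for all proper prefixes of the canonicalized schedule; let $D_1 = D_1' \| \delta$ where $\delta$ is the next canonical block. By the induction hypothesis, there is $D_2' \in D_T(n)$ with $Path(C, D_1') = Path(C, D_2')$. Let $C_0'$ and $C_0''$ be the configurations after executing $D_1'$ and $D_2'$ from $C$. By \Cref{lem:tool-fetch,lem:tool-execute,lem:tool-retire}, these configurations agree on $\buf$-contents for all matching indices, and in particular $|C_0'.\buf| = |C_0''.\buf|$. We can therefore extend $D_2'$ using the exact same schema of \Cref{def:tool-schedules} that matches $\delta$: for $\op$, $\lfence$, $\store$, and $\load$ we use the unique listed extension; for $\cond$, we use $\fetchg{b};\executei$ if the guess in $\delta$ is correct and $\fetchg{\lnot b}$ otherwise; for retires triggered by the buffer reaching $n$, we use the retire extension. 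The extended schedule $D_2 = D_2' \| \delta'$ lies in $D_T(n)$, and $Path(C, D_2) = Path(C, D_1)$ by construction since $\delta$ and $\delta'$ record identical guesses and forwarding choices.

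The main obstacle is the canonicalization step, and specifically showing that moving fetches earlier and retires later does not disturb the forwarding choices recorded in $Path$. The concern is that bringing additional stores into the buffer before a given load could change which originating index $j$ is feasible. We address this by observing two facts: buffer indices are assigned in program order at fetch time, so the index $j$ of any originating store is determined by its position in the program relative to the load, not by scheduling; and in the canonicalized schedule, at every load execution the prefix of the buffer is precisely the prefix of the (misspeculation-free) program trace up to that load, which is identical to the corresponding prefix in $D_1$. Hence the same forwarding options are available, and in particular the choice recorded by $Path$ in $D_1$ remains a legal directive in the canonicalized schedule. With this invariant established, the induction closes and $D_2 \in D_T(n)$ realizes $Path(C, D_1)$ as required.
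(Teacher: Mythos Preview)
Your proposal is substantially more elaborate than the paper's own proof, which is a single sentence: it simply observes that, by construction of $D_T(n)$, both branch outcomes are included at every $\cond$ and every load can ``skip'' any combination of prior stores. The paper treats the proposition as essentially immediate from \Cref{def:tool-schedules}.

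Your argument, however, has a genuine gap in how it handles store-to-load forwarding. You write that ``for $\op$, $\lfence$, $\store$, and $\load$ we use the unique listed extension,'' but the $\store$ case in \Cref{def:tool-schedules} has \emph{two} extensions: one that executes only the value, and one that executes both value and address. This branching is not incidental---it is precisely the mechanism by which $D_T(n)$ realizes different forwarding outcomes. In the (non-alias-prediction) semantics exercised by \tool, a load's $\executei$ directive does not itself choose a forwarding source; the rules \textsc{load-execute-nodep} and \textsc{load-execute-forward} deterministically select the most recent prior $\store$ with a \emph{resolved} matching address (or memory, if none). The only way a tool schedule can make a load at index $i$ forward from a particular store at index $j$ (or from memory) is by having chosen, for each earlier store, whether or not to resolve its address. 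Your inductive step never makes this choice, and your obstacle paragraph mislocates the difficulty: the issue is not that canonicalization might disturb which indices are present in the buffer, but that you must select the correct one of the two $\store$ extensions at each store so that, by the time each subsequent load executes, exactly the right set of store addresses is resolved to reproduce the forwarding pattern in $Path(D_1)$.

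Concretely, to repair the construction you would need to look ahead in $Path(D_1)$ when processing a $\store$ at index $j$: if some later load records forwarding from $j$, take the value-and-address extension; otherwise take the value-only extension (deferring address resolution until forced at retirement). Without this, your $D_2$ cannot in general match the forwarding component of $Path(D_1)$.
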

\begin{proof}
  The proof stems directly from the definition of $D_T(n)$; at every branch, both branches are added to the set of schedules,
  and every load is able to ``skip'' any combination of prior stores.
\end{proof}

\begin{theorem}[Soundness of tool]
\label{thm:tool-soundness}
  If speculative execution of $C$ under a schedule $D$ with speculation bound $n$ produces a trace $O$ that contains at least one secret label,
  then there exists a schedule $D_t \in D_T(n)$ that produces a trace $O_t$ that also contains at least one secret label.
\end{theorem}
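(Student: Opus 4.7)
The plan is to transfer the secret-leaking observation from the arbitrary schedule $D$ to some tool schedule $D_t \in D_T(n)$ by first matching control-flow and forwarding decisions via the path-coverage proposition, and then using the per-index correspondence lemmas to argue that the same observation is produced at the same buffer index. Since $D$ respects the speculation bound $n$, Proposition~\ref{prop:path-coverage} yields a $D_t \in D_T(n)$ with $Path(C, D) = Path(C, D_t)$. This $D_t$ is our candidate witness.

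Next, I would identify the directive $d$ in $D$ responsible for emitting the secret observation $o \in O$, along with the buffer index $i$ it affects. Since $\fetch$-type directives do not emit observations, $d$ must be either an $\executex$-type or $\retire$ directive. By the construction in Definition~\ref{def:tool-schedules}, $D_t$ contains a matching directive $d_t$ of the same type affecting the same index $i$: every fetched instruction in a tool schedule is followed by its execute directive (once the buffer is not full), and the oldest instruction is eventually retired; for stores in particular, both the value-only and the value-plus-address continuations are present in $D_T(n)$, so the variant matching $d$ is available. I would then apply Lemma~\ref{lem:tool-execute} or Lemma~\ref{lem:tool-retire} to the prefix of $D$ ending at $d$ and the prefix of $D_t$ ending at $d_t$: both prefixes share the same path (as a prefix property of $Path$) and both end in the same directive affecting the same index, so the two lemmas force the observations to coincide. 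Hence $o \in O_t$ and $D_t$ witnesses the leak.

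The main obstacle will be handling directives $d$ that occur inside a mispredicted speculation window. In this case the directive $d_t$ we need in $D_t$ must live inside a correspondingly mispredicted window that has not yet been rolled back. Two subtleties arise here. First, the correspondence lemmas (as stated) are formulated for schedules without misspeculated steps in the sense of Definition~\ref{def:misspeculatedsteps}, so I would need to observe that a secret observation emitted inside a misspeculation is nevertheless emitted \emph{before} the rollback-triggering directive; one can therefore consider the prefix up to (and including) $d$ as a non-rolled-back observation, with $Path$ agreeing with the corresponding prefix in $D_t$. Second, I would need to check that $D_T(n)$ is rich enough to keep the mispredicted path alive until $d_t$ fires. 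This is exactly the design intent behind the ``full buffer'' case in Definition~\ref{def:tool-schedules}, which delays the resolution of mispredicted branches until the buffer fills up; a short argument by induction on the length of the misspeculation window should suffice to confirm this, though making it fully rigorous is where the real bookkeeping lives.
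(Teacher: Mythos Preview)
Your proposal follows essentially the same approach as the paper: invoke path coverage (Proposition~\ref{prop:path-coverage}) to obtain a tool schedule with matching $Path$, then use the per-index correspondence lemmas to transfer the secret observation. The paper's proof is in fact terser than yours---it only cites Lemma~\ref{lem:tool-execute}, whereas you (more carefully) distinguish the $\executex$ and $\retire$ cases and discuss misspeculation explicitly.

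The one tactical difference worth flagging: the paper \emph{first} truncates $D$ to the prefix $D^*$ ending at the first secret-producing directive, and \emph{then} applies path coverage to $D^*$. You instead apply path coverage to the full $D$ and afterwards pass to prefixes, relying on a ``prefix property of $Path$'' to conclude that the truncated schedules still have equal $Path$. That claim is not immediate: $Path$ records $(i,b)$ and $(i,j)$ entries in the order the directives occur in the schedule, and since $D$ and $D_t$ may interleave $\load$ executions differently, equal $Path$ on the full schedules does not automatically give equal $Path$ on arbitrary prefixes ending at the same buffer index. The paper's ordering sidesteps this by making the truncated $D^*$ the object to which path coverage is applied, so the lemma preconditions ($Path(D_1)=Path(D_2)$, same final directive, same index) are obtained directly rather than via a prefix argument. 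If you keep your ordering you will need an extra lemma relating $Path$ of prefixes, whereas adopting the truncate-first ordering makes that step disappear.
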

\begin{proof}
  We can truncate $D$ to a schedule $D^*$ that ends at the first directive to produce a secret observation.
  By \Cref{prop:path-coverage} there exists a schedule $D_0 \in D_T(n)$ such that $Path(D_t) = Path(D^*)$. By following construction
  of tool schedules as given in \Cref{def:tool-schedules}, we can find a
  schedule $D_t \in D_T(n)$ that satisfies the preconditions for \Cref{lem:tool-execute}.
  Then by that same lemma, $D_t$ produces the same final observation as $D^*$, which contains a secret label.
\end{proof}

\end{document}